\newcommand{\ra}{\rightarrow}
\newcommand{\A}{\mathcal{A}}
\newcommand{\M}{\mathcal{M}}
\newcommand{\eps}{\varepsilon}
\newcommand{\diam}{\textrm{\rm diam}}
\newcommand{\tO}{\tilde{O}}
\newcommand{\mnote}[1]{}
\title{Sublinear Time Algorithms for Earth Mover's Distance}
\author{Khanh Do Ba\\MIT, CSAIL\\doba@mit.edu \and 
Huy L. Nguyen\\MIT\\hlnguyen@mit.edu \and
Huy N. Nguyen\\MIT, CSAIL\\huy2n@mit.edu \and
Ronitt Rubinfeld\\MIT, CSAIL\\ronitt@csail.mit.edu}
\begin{document}
%--------------
%% preamble.tex
%% this should be included with a command like
%% \input{preamble.tex}
%% \lecture{1}{September 4, 2004}{Ronitt Rubinfeld}{name
%%  of poor scribe}

\hbadness=10000
\vbadness=10000

\setlength{\oddsidemargin}{.25in}
\setlength{\evensidemargin}{.25in}
\setlength{\textwidth}{6in}
\setlength{\topmargin}{-0.4in}
\setlength{\textheight}{8.5in}

\newcommand{\handout}[5]{
   \noindent
   \begin{center}
   \framebox{
      \vbox{
    \hbox to 5.78in { {\bf 6.896 Sublinear Time Algorithms}
     	 \hfill #2 }
       \vspace{4mm}
       \hbox to 5.78in { {\Large \hfill #5  \hfill} }
       \vspace{2mm}
       \hbox to 5.78in { {\it #3 \hfill #4} }
      }
   }
   \end{center}
   \vspace*{4mm}
}

\newcommand{\lecture}[4]{\handout{#1}{#2}{Lecturer:
#3}{Scribe: #4}{Lecture #1}}

\newtheorem{theorem}{Theorem}
\newtheorem{corollary}[theorem]{Corollary}
\newtheorem{lemma}[theorem]{Lemma}
\newtheorem{observation}[theorem]{Observation}
\newtheorem{proposition}[theorem]{Proposition}
\newtheorem{definition}[theorem]{Definition}
\newtheorem{claim}[theorem]{Claim}
\newtheorem{fact}[theorem]{Fact}
\newtheorem{assumption}[theorem]{Assumption}

\newcommand{\qed}{\rule{7pt}{7pt}}
\newcommand{\dis}{\mathop{\mbox{\rm d}}\nolimits}
\newcommand{\per}{\mathop{\mbox{\rm per}}\nolimits}
\newcommand{\area}{\mathop{\mbox{\rm area}}\nolimits}
\newcommand{\cw}{\mathop{\rm cw}\nolimits}
\newcommand{\ccw}{\mathop{\rm ccw}\nolimits}
\newcommand{\DIST}{\mathop{\mbox{\rm DIST}}\nolimits}
\newcommand{\OP}{\mathop{\mbox{\it OP}}\nolimits}
\newcommand{\OPprime}{\mathop{\mbox{\it OP}^{\,\prime}}\nolimits}
\newcommand{\ihat}{\hat{\imath}}
\newcommand{\jhat}{\hat{\jmath}}
\newcommand{\abs}[1]{\mathify{\left| #1 \right|}}

\newenvironment{proof}{\noindent{\bf Proof}\hspace*{1em}}{\qed\bigskip}
\newenvironment{proof-sketch}{\noindent{\bf Sketch of Proof}\hspace*{1em}}{\qed\bigskip}
\newenvironment{proof-idea}{\noindent{\bf Proof Idea}\hspace*{1em}}{\qed\bigskip}
\newenvironment{proof-of-lemma}[1]{\noindent{\bf Proof of Lemma #1}\hspace*{1em}}{\qed\bigskip}
\newenvironment{proof-attempt}{\noindent{\bf Proof Attempt}\hspace*{1em}}{\qed\bigskip}
\newenvironment{proofof}[1]{\noindent{\bf Proof}
of #1:\hspace*{1em}}{\qed\bigskip}
\newenvironment{remark}{\noindent{\bf Remark}\hspace*{1em}}{\bigskip}

% \makeatletter
% \@addtoreset{figure}{section}
% \@addtoreset{table}{section}
% \@addtoreset{equation}{section}
% \makeatother

\newcommand{\FOR}{{\bf for}}
\newcommand{\TO}{{\bf to}}
\newcommand{\DO}{{\bf do}}
\newcommand{\WHILE}{{\bf while}}
\newcommand{\AND}{{\bf and}}
\newcommand{\IF}{{\bf if}}
\newcommand{\THEN}{{\bf then}}
\newcommand{\ELSE}{{\bf else}}

\makeatletter
\def\fnum@figure{{\bf Figure \thefigure}}
\def\fnum@table{{\bf Table \thetable}}
\long\def\@mycaption#1[#2]#3{\addcontentsline{\csname
  ext@#1\endcsname}{#1}{\protect\numberline{\csname 
  the#1\endcsname}{\ignorespaces #2}}\par
  \begingroup
    \@parboxrestore
    \small
    \@makecaption{\csname fnum@#1\endcsname}{\ignorespaces #3}\par
  \endgroup}
\def\mycaption{\refstepcounter\@captype \@dblarg{\@mycaption\@captype}}
\makeatother

\newcommand{\figcaption}[1]{\mycaption[]{#1}}
\newcommand{\tabcaption}[1]{\mycaption[]{#1}}
\newcommand{\head}[1]{\chapter[Lecture \##1]{}}
\newcommand{\mathify}[1]{\ifmmode{#1}\else\mbox{$#1$}\fi}
\newcommand{\bigO}O
\newcommand{\set}[1]{\mathify{\left\{ #1 \right\}}}
\def\half{\frac{1}{2}}

% Coding theory addenda

\newcommand{\enc}{{\sf Enc}}
\newcommand{\dec}{{\sf Dec}}
\newcommand{\E}{{\rm Exp}}
\newcommand{\Var}{{\rm Var}}
\newcommand{\Z}{{\mathbb Z}}
\newcommand{\F}{{\mathbb F}}
\newcommand{\integers}{{\mathbb Z}^{\geq 0}}
\newcommand{\R}{{\mathbb R}}
\newcommand{\Q}{{\cal Q}}
\newcommand{\eqdef}{{\stackrel{\rm def}{=}}}
\newcommand{\from}{{\leftarrow}}
\newcommand{\vol}{{\rm Vol}}
\newcommand{\poly}{{\rm poly}}
\newcommand{\ip}[1]{{\langle #1 \rangle}}
\newcommand{\wt}{{\rm wt}}
\renewcommand{\vec}[1]{{\mathbf #1}}
\newcommand{\mspan}{{\rm span}}
\newcommand{\rs}{{\rm RS}}
\newcommand{\RM}{{\rm RM}}
\newcommand{\Had}{{\rm Had}}
\newcommand{\calc}{{\cal C}}
\renewcommand{\binom}[2]{{#1 \choose #2}}

\newcommand{\fig}[4]{
        \begin{figure}
        \setlength{\epsfysize}{#2}
        \vspace{3mm}
        \centerline{\epsfbox{#4}}
        \caption{#3} \label{#1}
        \end{figure}
        }

\newcommand{\ord}{{\rm ord}}

\providecommand{\norm}[1]{\lVert #1 \rVert}
\newcommand{\embed}{{\rm Embed}}
\newcommand{\qembed}{\mbox{$q$-Embed}}
\newcommand{\calh}{{\cal H}}
\newcommand{\lp}{{\rm LP}}

\maketitle

\begin{abstract}
	We study the problem of estimating the Earth Mover's Distance (EMD) between
	probability distributions when given access only to samples. We give closeness
	testers and additive-error estimators over domains in $[0,\Delta]^d$, with
	sample complexities independent of domain size -- permitting the testability
	even of continuous distributions over infinite domains. Instead, our algorithms depend on
	other parameters, such as the diameter of the domain space, which may be
	significantly smaller. We also prove lower bounds showing the dependencies
	on these parameters to be essentially optimal. Additionally, we consider whether
	natural classes of distributions exist for which there are
	algorithms with better dependence on the dimension, and show that for
	highly clusterable data, this is indeed the case. Lastly, we consider a variant
	of the EMD, defined over tree metrics instead of the usual $\ell_1$ metric, and
	give optimal algorithms.
\end{abstract}
%==== SECTION =====================================================================
\section{Introduction}

In traditional algorithmic settings, algorithms requiring linear time and/or space
are generally considered to be highly efficient; sometimes even
polynomial time and space requirements are acceptable. However, today this is often no
longer the case. With data being generated at rates of terabytes a second,
sublinear time algorithms have become crucial in many applications. In the increasingly important
area of massive data algorithmics, a number of models have been proposed and
studied to address this. One of these arises when the data can be naturally viewed
as a probability distribution (e.g., over IP addresses, or items sold by an online
retailer, etc.) that allows i.i.d. samples to be drawn from it. This is the model
on which this paper focuses.

Perhaps the most fundamental problem in this model is that of testing whether two
distributions are close. For instance, if an online retailer such as Amazon.com
wishes to detect changes in consumer habits, one way of doing so might be to see if
the distribution of sales over all offered items this week, say, is significantly
different from last week's distribution. This problem has been studied extensively, mostly
under the $\ell_1$ and $\ell_2$ distances, and algorithms sublinear in time and
sample complexity exist to distinguish
whether two distributions are identical or $\eps$-far from each other~\cite{L1tester,GR,valiant}.
However, under the $\ell_1$ distance, for instance, the sample complexity, though sublinear, can be no
smaller than $n^{2/3}$ (where $n$ is the domain size), which may be
prohibitively large~\cite{L1tester,valiant}.

Fortunately, in many situations there is a natural metric on the underlying domain,
under which nearby points should be treated as ``less different'' than faraway
points. This motivates a metric known as Earth Mover's Distance (EMD), first
introduced in the vision community as a measure of (dis)similarity between images
that more accurately reflects human perception than more traditional $\ell_1$
~\cite{peleg}. It has since proven to be important in computer graphics and vision
~\cite{rubner2,rubner1,rubner3,cohen,ruzon1,ruzon2,rubner4}, and
has natural applications to other areas of computer science. As a result, its
computational aspects have recently drawn attention from the algorithms community
as well~\cite{alex,indyk,charikar,indyk2}. However, previous work has generally focused
on the more classical model of approximating the EMD when given the input distributions
explicitly; that is, when the exact probability of any domain element can be queried.
As far as we know, no work has been done on estimating and closeness
testing of EMD when given access only to i.i.d. samples of the distributions.

In this model, it is easy to see that we cannot hope to compute a multiplicative approximation,
even with arbitrarily many samples, since that would require us to distinguish between
arbitrarily close distributions and identical ones. However, if we settle for additive error, we
show in this paper that, in contrast to the $\ell_1$ distance, we can estimate EMD using a number
of samples {\em independent}
of the domain size. Instead, our sample complexities depend only on the {\em
diameter} of the domain space, which in many cases can be significantly smaller.
The consequence is that this allows us to effectively deal with distributions over
extremely large domains, and even, under a natural generalization of EMD,
continuous distributions over infinite domains.

Specifically, if $p$ and $q$ are distributions over $M \subset [0,\Delta]^d$ (where $d$ is a
constant), we can
\begin{itemize}
	\item estimate $EMD(p,q)$ to within an additive error of $\eps$ with
		$\tilde{O}((\Delta/\eps)^{d + O(1)})$ samples,
	\item distinguish whether $p=q$ or $EMD(p,q) > \eps$ with
		$\tilde{O}((\Delta/\eps)^{2d/3 + O(1)})$ samples, and
	\item if $q$ is known, distinguish whether $p=q$ or $EMD(p,q) > \eps$ with
		$\tilde{O}((\Delta/\eps)^{d/2 + O(1)})$ samples.
\end{itemize}
We also give lower bounds that imply these results to be essentially optimal
(up to small $poly(\Delta/\eps)$ factors). In the case of $d=1$ or $2$, upper
and lower bounds for both testers become $\Theta((\Delta/\eps)^2)$.

Additionally, we consider assumptions on the data that might make the problem
easier, and give an improved algorithm in the case our input distributions is highly
clusterable. Finally, it is natural to consider the EMD over domains endowed with a
metric other than $\ell_1$ distance. We give an optimal (upto polylogarithmic factors)
algorithm for estimating EMD over tree metrics.
%\mnote{Related work?}

%==== SECTION =====================================================================
\section{Preliminaries}

%==== SUBSECTION ==================================================================
\subsection{Earth Mover's Distance}

We start with the following definition.

\begin{definition}
	A {\em supply-demand network} is a directed bipartite graph $G=(S\cup T,E)$
	consisting of supply vertices $S$ and demand vertices $T$, with supply
	distribution $p$ on $S$ and demand distribution $q$ on $T$, and edge set
	$E=S\times T$,  with associated weights $w:E \ra \R^+$. A {\em satisfying flow} for $G$ is a
	mapping $f:E\ra\R^+$ such that for each $s \in S$ and each $t \in T$,
	\begin{eqnarray*}
		\sum_{t'\in T} f((s,t')) &=& p(s),\text{ and}\\
		\sum_{s'\in S} f((s',t)) &=& q(t).
	\end{eqnarray*}
	The cost of satisfying flow $f$ is given by
	\[
		C(f) = \sum_{e\in E} f(e)w(e).
	\]	
\end{definition}

We define the Earth Mover's Distance (EMD) as follows.

\begin{definition}\label{def:flow}
	Let $p$ and $q$ be probability distributions on a finite metric space $(M,\delta)$.
	Then let $G$ be the supply-demand network given by supply vertices
	$S=\{s_x~|~x\in M\}$ and demand vertices $T=\{t_x~|~x\in M\}$, with supply distribution
	$\hat{p}: s_x \mapsto p(x)$ and demand distribution $\hat{q}: t_x \mapsto q(x)$, and
	edge weights $w: (s_x,t_y) \mapsto \delta(x,y)$. Define $EMD(p,q)$ to be the minimum
	cost of all satisfying flows for $G$.
\end{definition}

It is straightforward to verify that the EMD as defined above is a metric on all
probability distributions on $M$. Note, moreover, that to upperbound the EMD it
suffices to exhibit any satisfying flow. 

Since the magnitude of the EMD depends on the distances in the underlying metric
space $M$, we must assume that $M$ is of bounded diameter. In particular, we will in
this paper focus mostly on the case where $M \subset [0,\Delta]^d$, for some $\Delta > 0$,
endowed with the $\ell_1$ metric.

Finally, we define a closeness tester for the EMD in the usual way as follows.

\begin{definition}
  Let $p$, $q$ be two distributions on (finite) metric space $M$. An
  {\em EMD-closeness tester} is an algorithm which takes as input samples from $p$ and $q$,
  together with a real number $\eps>0$, and guarantees that
  \begin{itemize}
    \item[(1)] if $p = q$, then it accepts with probability at least $2/3$, and
    \item[(2)] if $EMD(p,q) > \eps$, then it rejects with probability at least $2/3$.
  \end{itemize}
\end{definition}

Note that it is also possible to define EMD for any two probability measures $p$ and $q$ in an infinite (continuous) metric space $(\mathcal{M},\delta_\mathcal{M})$, $\mathcal{M} \subset [0,\Delta]^d$, by using Wasserstein metric: $$ EMD(p,q) =  \inf_{\gamma \in \Gamma(p,q)}{\int_{\mathcal{M} \times \mathcal{M}}{\delta_\mathcal{M}(x,y)d\gamma(x,y)}}  $$  where $\Gamma(p,q)$ denotes the collection of all measures on $\mathcal{M} \times \mathcal{M}$ with marginals $p$ and $q$ on the first and second factors respectively. By using $\eps/4$-net, $(\mathcal{M},\delta_\mathcal{M})$ can be discretized into a finite metric $(M,\delta)$ in such a way that the $EMD$ distance of any two probability measures only increases or decreases at most $\eps/2$. Therefore, an $EMD$-closeness tester with additive error $\eps/2$ for $(M,\delta)$ is also a valid $EMD$-closeness tester for $(\mathcal{M},\delta)$ with additive error $\eps$.

%==== SUBSECTION ==================================================================
\subsection{Properties of EMD}

Let us get a firmer handle on the EMD by relating it to $\ell_1$ distance, via the
following lemmas.

\begin{lemma}\label{lem:L1/2}
        If $p$ and $q$ are distributions on $(M,\delta)$, there exists a minimum cost
        satisfying flow $f$ from $S$ to $T$ (as defined in Def. \ref{def:flow}) such that the total
        amount sent by $f$ across edges with non-zero cost is exactly $||p-q||_1/2$.
\end{lemma}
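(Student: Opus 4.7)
The plan is to show that among all min-cost satisfying flows, there exists one that saturates each "diagonal" edge $(s_x, t_x)$ with exactly $\min(p(x), q(x))$ units of flow. Once this is established, a direct counting argument gives the desired equality.

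First, I would observe that since $\delta$ is a metric, the zero-cost edges in $G$ are exactly the diagonal edges $(s_x, t_x)$, while every off-diagonal edge $(s_x, t_y)$ with $x \neq y$ has strictly positive weight $\delta(x,y)$. So the claim reduces to showing that there is a min-cost flow $f$ with $f(s_x, t_x) = \min(p(x), q(x))$ for every $x \in M$.

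Next, I would set up a rerouting argument. Start from any min-cost flow $f$, and suppose there is some $x$ with $f(s_x, t_x) < \min(p(x), q(x))$. Then by the conservation constraints, the extra supply at $s_x$ must go to some $t_y$ with $y \neq x$, giving $f(s_x, t_y) > 0$, and similarly the unmet demand at $t_x$ must be supplied by some $s_z$ with $z \neq x$, giving $f(s_z, t_x) > 0$. I would then reroute a small amount $\eps > 0$: decrease $f(s_x, t_y)$ and $f(s_z, t_x)$ by $\eps$, and increase $f(s_x, t_x)$ and $f(s_z, t_y)$ by $\eps$. The change in cost is
\[
\eps\bigl(\delta(x,x) + \delta(z,y) - \delta(x,y) - \delta(z,x)\bigr) = \eps\bigl(\delta(z,y) - \delta(x,y) - \delta(z,x)\bigr) \le 0
\]
by the triangle inequality, so the new flow is also min-cost. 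Choosing $\eps$ maximal, we strictly increase $f(s_x, t_x)$ (or drive some off-diagonal flow to zero). Iterating this across all $x$ yields a min-cost flow with $f(s_x, t_x) = \min(p(x), q(x))$ for every $x$.

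Finally, I would compute the total off-diagonal flow in this min-cost $f$. The total flow leaving $S$ equals $\sum_x p(x) = 1$, so the amount on non-zero-cost edges is $1 - \sum_x \min(p(x), q(x))$. Using the identity $|a - b| = a + b - 2\min(a,b)$ termwise,
\[
\|p-q\|_1 \;=\; \sum_x |p(x) - q(x)| \;=\; 2 - 2\sum_x \min(p(x), q(x)),
\]
so $\sum_x \min(p(x),q(x)) = 1 - \|p-q\|_1/2$, and the off-diagonal mass is exactly $\|p-q\|_1/2$, as claimed. The main obstacle is the rerouting step: one has to be careful that iterating the swaps terminates and that it can be applied simultaneously or in the right order to ensure every diagonal edge ends up saturated — this is where the fact that each swap either strictly increases some $f(s_x, t_x)$ or strictly decreases the support of off-diagonal flow is used, with a standard limiting/compactness argument if the iteration is infinite.
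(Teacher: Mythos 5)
Your proposal is correct and follows essentially the same route as the paper: the same triangle-inequality rerouting of $\eps$ units from $(s_x,t_y)$ and $(s_z,t_x)$ onto $(s_x,t_x)$ and $(s_z,t_y)$ to saturate every diagonal edge at $\min(p(x),q(x))$, followed by the counting identity $\sum_x \min(p(x),q(x)) = 1 - \|p-q\|_1/2$. Your explicit attention to termination of the swapping process is in fact slightly more careful than the paper's one-line ``repeat the procedure.''
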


\begin{proof}
        Let $a = \sum_x \min\{p(x),q(x)\}$ and $A = \sum_x \max\{p(x),q(x)\}$, so that
        $A-a = ||p-q||_1$ and $A+a = 2$. Observe that the total amount sent from $S$ to
        $T$ is 1, and the maximum possible amount sent across edges with zero cost is $a$, which
        leaves at least $1-a = ||p-q||_1/2$ to be sent across non-zero cost edges.
        
        On the other hand, suppose that for some $x$, an optimal flow through the edge
        $(s_x,t_x)$ is less than $\min\{p(x),q(x)\}$.
        Then there exist points $y$ and $z$ such that $f$ sends at least $\alpha>0$ from $s_x$
        to $t_y$ and from $s_z$ to $t_x$. We can replace this partial flow, which costs
        $\alpha(\delta(x,y) + \delta(x,z))$, with one sending $\alpha$ from $s_x$ to $t_x$
        and from $s_z$ to $t_y$, which costs $\alpha(\delta(y,z))$. Doing so, we will not
        increase the total cost (by triangle inequality), nor will we affect the rest of the
        flow. We can therefore repeated the procedure until we obtain an optimal flow that
        saturates every zero-edge.
%        In the residual graph of $f$, we have the cycle
%        $t_y \rightarrow s_x \rightarrow t_x \rightarrow s_z \rightarrow t_y$ of weight
%        $-\delta(y, x)-\delta(x, z) + \delta(y, z) \leq 0$. Thus, we can cancel the cycle
%        without increasing the cost of $f$.
\end{proof}

\begin{corollary}\label{cor:EMDvsL1}
        If $p$ and $q$ are distributions on $M$, where $M$ has minimum
        distance $\delta$ and diameter $\Delta$, then
        \[
                \frac{||p-q||_1}{2}\cdot\delta \le EMD(p,q) \le \frac{||p-q||_1}{2}\cdot\Delta.
        \]
\end{corollary}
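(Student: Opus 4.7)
The plan is to derive both inequalities as essentially immediate consequences of Lemma \ref{lem:L1/2}. The key observation from that lemma's proof is twofold: (a) in \emph{any} satisfying flow, the total mass sent across nonzero-cost edges is at least $\|p-q\|_1/2$, since zero-cost edges can absorb at most $a = \sum_x \min\{p(x),q(x)\}$ units of flow, leaving at least $1 - a = \|p-q\|_1/2$ for nonzero-cost edges; and (b) there exists a minimum-cost flow in which the amount across nonzero-cost edges is \emph{exactly} $\|p-q\|_1/2$.

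For the lower bound, I would fix an optimal flow $f$ and use observation (a): $f$ must transport at least $\|p-q\|_1/2$ units across edges $(s_x,t_y)$ with $x \neq y$. Each such edge has weight $w(s_x,t_y) = \delta(x,y) \geq \delta$, where $\delta$ denotes the minimum pairwise distance in $M$. Summing the contributions of these edges yields $C(f) \geq (\|p-q\|_1/2)\cdot \delta$, so $EMD(p,q) \geq (\|p-q\|_1/2)\cdot \delta$.

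For the upper bound, I would invoke observation (b) to exhibit an optimal flow $f^{*}$ that saturates every zero-cost edge, sending exactly $\|p-q\|_1/2$ across nonzero-cost edges in total. Each nonzero-cost edge has weight at most $\Delta$, and zero-cost edges contribute nothing to $C(f^{*})$, so $EMD(p,q) = C(f^{*}) \leq (\|p-q\|_1/2)\cdot \Delta$. The two bounds combine to give the stated chain of inequalities.

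The main ``obstacle,'' if any, is simply being precise about extracting the needed structural facts from the proof of Lemma \ref{lem:L1/2} rather than its statement alone; once (a) and (b) are in hand the corollary is essentially a one-line bounding of edge weights between $\delta$ and $\Delta$.
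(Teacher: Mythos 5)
Your proof is correct and matches the intended derivation: the paper states this corollary without proof, treating it as an immediate consequence of Lemma \ref{lem:L1/2}, which is exactly what you do. The only simplification available is that the lemma's statement alone already suffices for both bounds — the optimal flow $f$ it provides sends exactly $\|p-q\|_1/2$ across nonzero-cost edges, each of weight between $\delta$ and $\Delta$, so $EMD(p,q) = C(f)$ is sandwiched as claimed without needing to re-extract observation (a) from the lemma's proof.
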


\begin{lemma}\label{lem:buckets}
        Let $p$ and $q$ be distributions on $M$ with diameter $\Delta$, and
        $\M = \{M_1,\dots,M_k\}$ be a partition of $M$ wherein $\diam(M_i)\le\Gamma$ for
        every $i\in[k]$. Let $P$ and $Q$ be distributions on $\M$
        induced by $p$ and $q$, resp. Then $EMD(p,q) \le \frac{||P-Q||_1}{2}\cdot\Delta +
        \Gamma$.
\end{lemma}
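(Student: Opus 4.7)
The plan is to construct an explicit satisfying flow for $p$ and $q$ whose cost is bounded by $\frac{\|P-Q\|_1}{2} \cdot \Delta + \Gamma$; by the remark after Definition \ref{def:flow} that any satisfying flow upper bounds $EMD(p,q)$, this suffices. The flow will be built in two stages: an intra-bucket stage that cheaply matches as much supply and demand as possible inside each $M_i$, followed by an inter-bucket stage that cleans up the residual imbalances using the full diameter $\Delta$.

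First I would decompose $p$ and $q$ according to $\M$: let $p_i$ and $q_i$ denote the restrictions of $p$ and $q$ to $M_i$, so that $\sum_{x\in M_i}p_i(x)=P(i)$ and $\sum_{x\in M_i}q_i(x)=Q(i)$. Within each bucket, I scale $p_i$ and $q_i$ down to a common total mass $\mu_i=\min\{P(i),Q(i)\}$ (for example, by multiplying $p_i$ by $\mu_i/P(i)$ when $P(i)\ge Q(i)$, and analogously for $q_i$), and pick any satisfying flow between these scaled partial distributions inside $M_i$. Since every edge used lies inside $M_i$ and thus has weight at most $\Gamma$, the cost of this intra-bucket flow is at most $\Gamma\cdot\mu_i$. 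Summing over $i$ gives total intra-bucket cost at most $\Gamma\sum_i\mu_i\le\Gamma$.

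After the intra-bucket stage, the unsatisfied residual supply in bucket $i$ is $(P(i)-Q(i))^+$ and the unsatisfied residual demand is $(Q(i)-P(i))^+$; by definition $\sum_i(P(i)-Q(i))^+=\sum_i(Q(i)-P(i))^+=\|P-Q\|_1/2$. To clear these residuals I route them along any flow from residual supply to residual demand (for concreteness, one may pick any satisfying flow in the bucket-level supply-demand network on $\M$ with supplies $(P(i)-Q(i))^+$ and demands $(Q(i)-P(i))^+$, and lift each bucket-to-bucket unit of flow to an arbitrary $(x,y)$ with $x\in M_i$, $y\in M_j$). Every edge used in this stage has weight at most $\Delta$, so the cost is at most $\Delta\cdot\|P-Q\|_1/2$.

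Adding the two stages yields a satisfying flow of total cost at most $\Gamma+\frac{\|P-Q\|_1}{2}\cdot\Delta$, which gives the claimed bound. The only mildly subtle step is the bookkeeping in the intra-bucket scaling to guarantee that the combined two-stage flow is actually a satisfying flow for the original $p$ and $q$ (i.e., that at each point $x\in M_i$ the mass sent out equals $p(x)$ and the mass received equals $q(x)$); this follows directly from the choice of scaling factor $\mu_i/P(i)$ and $\mu_i/Q(i)$ together with the residuals $(P(i)-Q(i))^+$ and $(Q(i)-P(i))^+$ handled in stage two.
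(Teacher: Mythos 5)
Your proof is correct and is essentially the paper's argument: both split the cost into an inter-bucket piece of $\frac{\|P-Q\|_1}{2}\cdot\Delta$ (moving the $\|P-Q\|_1/2$ units of bucket-level imbalance across distances at most $\Delta$) and an intra-bucket piece of at most $\Gamma$. The only difference is presentational -- you build one explicit two-stage flow with the intra-bucket matching first, while the paper first moves mass between buckets to form an intermediate distribution $p'$ inducing $Q$ on $\M$ and then applies the triangle inequality for EMD.
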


\begin{proof}
        Let us define distribution $p'$ by moving some of the probability weight of $p$
        {\em between} $M_i$'s (taking from and depositing anywhere within the respective
        $M_i$'s) in such a way that $p'$ induces $Q$ on $\M$. This is effectively a flow
        from $P$ to $Q$ where all distances are bounded by $\Delta$, so by Lemma
        \ref{lem:L1/2} it can be done at cost at most $\frac{||P-Q||_1}{2}\cdot\Delta$.
        It follows that $EMD(p,p') \le \frac{||P-Q||_1}{2}\cdot\Delta$.
        
        Then, having equalized the probability weights of each $M_i$, let us move the
        probability weight of $p'$ {\em within} each $M_i$ to precisely match $q$. This
        might require moving everything (i.e., 1), but the distance anything is moved is
        at most $\Gamma$, so $EMD(p',q) \le \Gamma$ and the lemma follows by triangle
        inequality.
\end{proof}

%==== SECTION =====================================================================
\subsection{Some tools from previous work}

Here, for completeness, we state some results from previous work that we will make use
of later. First, we define some useful distributions described in \cite{indyk}
for testing closeness between distributions over subsets of $[0, \Delta]^d$.

\begin{definition}
Given distribution $p$ over $M \subset [0, \Delta]^d$ and a positive integer $i$, let $G^{(i)}$ be a grid with side length $\frac{\Delta}{2^i}$ over $[0, \Delta]^d$ centered at the origin. Define the {\em $i$-coarsening of $p$}, denoted $p^{(i)}$, to be the distribution over the grid cells of $G^{(i)}$ such that, for each grid cell c of $G^{(i)}$, $p^{(i)}(c) = \sum_{u\in c} p(u)$.
\end{definition}

The $p^{(i)}$'s can be thought of as coarse approximations of $p$
where all points in the same grid cell are considered to be the same point.
We then have the following lemma from \cite{indyk} relating the EMD of two distributions
to a weighted sum of the $\ell_1$ distances of their $i$-coarsenings.
\begin{lemma}
\label{lemma:bindiv}
For any two distributions $p$ and $q$ over $M \subset [0, \Delta]^d$,
\[
  EMD(p, q) \leq d \left( \sum_{i=1}^{\log{(2\Delta d/\eps)}}
  		\frac{\Delta}{2^{i-1}}\cdot \left\| p^{(i)} - q^{(i)}\right\|_1  \right) + \frac{\eps}{2}.
\]
\end{lemma}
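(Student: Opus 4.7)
The plan is to exhibit an explicit satisfying flow from $p$ to $q$ whose cost is at most the stated bound, which then upperbounds $EMD(p,q)$ by Definition \ref{def:flow}. I will decompose the flow into $K+1$ layers, where $K=\log(2\Delta d/\eps)$: in layer $i\le K$, mass will move only between level-$i$ grid cells that share a common level-$(i-1)$ parent, and in the final layer mass will be rearranged arbitrarily within each level-$K$ cell to finish matching $q$.

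To analyze this, I will track intermediate distributions $\pi_0=p,\pi_1,\ldots,\pi_K$ with $\pi_i^{(i)}=q^{(i)}$, taking each $\pi_i$ to be $p$ rescaled within every level-$i$ cell so that the cell's total mass equals $q^{(i)}$ of that cell (with empty cells filled in using $q$). Layer $i$ will be the flow that transforms $\pi_{i-1}$ into $\pi_i$: because $\pi_{i-1}^{(i-1)}=\pi_i^{(i-1)}=q^{(i-1)}$, all of the required movement stays inside individual level-$(i-1)$ cells. The last step from $\pi_K$ to $q$ similarly stays inside level-$K$ cells.

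The cost of layer $i\le K$ is at most $d\Delta/2^{i-1}$ (the $\ell_1$ diameter of a level-$(i-1)$ cell) times the total mass moved. Within a fixed level-$(i-1)$ cell $c$, the mass moved is $\tfrac12\sum_{c'\subset c}|\pi_{i-1}^{(i)}(c')-q^{(i)}(c')|$. Computing $\pi_{i-1}^{(i)}(c')=p^{(i)}(c')\cdot q(c)/p(c)$ and inserting $\pm p^{(i)}(c')$ inside the absolute value, the triangle inequality yields the key estimate $\tfrac12\bigl(|p(c)-q(c)|+\sum_{c'\subset c}|p^{(i)}(c')-q^{(i)}(c')|\bigr)$, since the ratio disappears after summing: $\sum_{c'\subset c}p^{(i)}(c')=p(c)$. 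Summing over all level-$(i-1)$ cells yields a layer-$i$ cost at most $(d\Delta/2^i)\bigl(\|p^{(i-1)}-q^{(i-1)}\|_1+\|p^{(i)}-q^{(i)}\|_1\bigr)$.

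Summing layers $1$ through $K$ and using $\|p^{(0)}-q^{(0)}\|_1=0$ (the single level-$0$ cell contains all mass), the two pieces regroup into a single sum bounded by $\sum_{i=1}^{K}(d\Delta/2^{i-1})\|p^{(i)}-q^{(i)}\|_1$; the constant works out to at most $\tfrac32<2$, so it fits comfortably under the stated coefficient. The final layer moves total mass at most $1$ a distance at most $d\Delta/2^K=\eps/2$, contributing the additive $\eps/2$. The main technical obstacle I anticipate is handling cleanly the ratio $q(c)/p(c)$ that arises from rescaling $p$ within each cell to define $\pi_i$; the $\pm p^{(i)}(c')$ triangle-inequality trick above is precisely what eliminates it and lets the per-layer bounds telescope.
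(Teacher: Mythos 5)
Your proof is correct, and it is worth noting that the paper itself gives no proof of this lemma at all -- it is quoted from Indyk--Thaper \cite{indyk} -- so what you have produced is a self-contained argument rather than a variant of one in the text. Your route is a top-down successive-refinement construction: you build intermediate distributions $\pi_0=p,\pi_1,\dots,\pi_K$ by rescaling $p$ inside each level-$i$ cell to match $q^{(i)}$, observe that the step $\pi_{i-1}\to\pi_i$ stays inside level-$(i-1)$ cells because both agree at coarseness $i-1$, and control the mass moved via the $\pm p^{(i)}(c')$ insertion, which correctly collapses the rescaling ratio using $\sum_{c'\subset c}p^{(i)}(c')=p^{(i-1)}(c)$. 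The resulting per-layer bound $\frac{d\Delta}{2^i}\bigl(\|p^{(i-1)}-q^{(i-1)}\|_1+\|p^{(i)}-q^{(i)}\|_1\bigr)$ regroups with total coefficient $\tfrac{3}{4}\cdot\frac{d\Delta}{2^{i-1}}$ per term, comfortably inside the stated $d\,\Delta/2^{i-1}$, and the final within-cell cleanup costs at most $d\cdot\Delta/2^{K}=\eps/2$. The more standard proof of this type of statement proceeds bottom-up: greedily match $p$-mass to $q$-mass inside the finest cells, pass the unmatched surplus up to the parent cells, and observe that the mass forced to travel a distance of order $d\Delta/2^{i-1}$ is controlled by $\|p^{(i)}-q^{(i)}\|_1$; the two arguments are essentially dual and give the same constants. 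Two small points you should make explicit if you write this up: the argument needs the grids $G^{(1)},\dots,G^{(K)}$ to be nested (each level-$i$ cell contained in a level-$(i-1)$ cell), which holds for the dyadic grids of the paper's definition; and in the degenerate case $p^{(i)}(c')=0<q^{(i)}(c')$ your convention of filling with $q|_{c'}$ is consistent across levels (subsequent layers then move nothing inside such cells), so the triangle-inequality bound remains a valid overestimate there.
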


Having established the various relationships between the EMD and the $\ell_1$ distance,
we will make use of the result from \cite{L1tester} below for testing closeness of distributions
in $\ell_1$ as a subroutine.

\begin{theorem}\label{complexL1tester}
  Given access to samples from two distributions $p$ and $q$ over $M$, where $|M|=n$,
  there exists an algorithm that takes $O(n^{2/3}\eps^{-4}\log n\log(1/\delta))$
  samples and (1) accepts with probability at least $1-\delta$ if $p=q$
  and (2) rejects with probability at least $1-\delta$ if $\left\|p-q\right\|_1 > \eps$.
\end{theorem}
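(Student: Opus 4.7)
The plan is to prove this closeness tester via reduction from $\ell_1$ testing to $\ell_2$ testing, combined with a collision-based estimator for the $\ell_2$ distance, which is the approach of Batu et al. The high-level idea is that while a direct $\ell_1$ test would be very expensive, an $\ell_2$ test can be done with sample complexity depending only on the largest element probability; by partitioning the domain so that probabilities within each part are comparable, we recover an $\ell_1$ bound via Cauchy–Schwarz with only a $\sqrt{|B|}$ loss.

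First, I would build an $\ell_2$ collision tester. Drawing $s$ samples from $p$ and $t$ samples from $q$, define $r_p$, $r_q$, and $s_{pq}$ to be the number of self-collisions within the $p$-samples, within the $q$-samples, and cross-collisions between the two sample sets, respectively. Then $\E[r_p] = \binom{s}{2}\|p\|_2^2$, $\E[r_q] = \binom{t}{2}\|q\|_2^2$, and $\E[s_{pq}] = st\langle p,q\rangle$, so an appropriate linear combination is an unbiased estimator of $\|p-q\|_2^2$. The key lemma is that if $\max_x\max\{p(x),q(x)\} \le b$, then $O(\sqrt{b}/\eps^{\text{const}})$ samples from each distribution suffice to distinguish $p=q$ from $\|p-q\|_2^2 > \eps$, because the variance of the estimator scales with moments like $\|p\|_2^2 \le b$ and $\|p\|_3^3 \le b^2$.

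Next, I would reduce the $\ell_1$ problem to the $\ell_2$ one by bucketing. Split off the ``heavy'' elements with probability exceeding $n^{-2/3}$ under $q$ (say); since each such element is sampled $\Omega(n^{1/3})$ times in $O(n^{2/3})$ samples, we can learn their $p$-weights accurately by direct empirical estimation and check agreement with $q$'s weights on them. For the remaining ``light'' elements, partition them into $O(\log n)$ buckets $B_k = \{x : q(x) \in [(1+\eps')^k/n,\,(1+\eps')^{k+1}/n]\}$ plus a tail bucket for negligible weights. Inside each $B_k$, all probabilities lie within a $(1+\eps')$ factor, so Cauchy–Schwarz gives $\|p_{B_k}-q_{B_k}\|_1 \le \sqrt{|B_k|}\cdot\|p_{B_k}-q_{B_k}\|_2$. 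If $\|p-q\|_1 > \eps$, then by averaging over the $O(\log n)$ buckets some $B_k$ contributes $\Omega(\eps/\log n)$ to the $\ell_1$ gap, forcing a detectable $\ell_2$ gap of $\Omega(\eps/(\sqrt{|B_k|}\log n))$ within that bucket. Running the $\ell_2$ tester on the conditional distributions inside each bucket, with $b = O(1/(n\,|B_k|\,\text{const}))$ after renormalization, balances out to $O(n^{2/3})$ total samples per bucket up to polylogs.

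The main obstacle will be the variance analysis of the collision estimator and the careful bookkeeping when passing between unconditional and conditional distributions on the buckets: the estimator of $\|p-q\|_2^2$ involves cancellation between three noisy terms whose individual fluctuations can dwarf the signal, so one must show that the dominant variance contributions come from moments bounded by the max-probability parameter $b$, giving the $\sqrt{b}$ dependence. Once that is in hand, the $\eps^{-4}$ factor appears naturally from needing to resolve $\|p-q\|_2^2$ down to $\eps^2$ scale with constant confidence, and the $\log(1/\delta)$ factor comes from standard median-of-means amplification across $O(\log(1/\delta))$ independent repetitions.
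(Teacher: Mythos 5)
This theorem is not proved in the paper at all -- it is imported verbatim from Batu, Fortnow, Rubinfeld, Smith and White (\cite{L1tester}) as a black-box subroutine -- so your reconstruction has to be judged against the proof in that reference. Your toolkit is the right one and matches theirs in its main components: the three-term collision statistic as an unbiased estimator of $\|p-q\|_2^2$, a variance bound controlled by $b=\max_x\max\{p(x),q(x)\}$, a sieve separating elements of probability above roughly $n^{-2/3}$ (whose weights can be learned empirically) from the light remainder, and Cauchy--Schwarz to convert an $\ell_1$ gap on the light part into an $\ell_2$ gap.

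There is, however, one genuine gap: the partition of the light elements into $O(\log n)$ buckets $B_k=\{x: q(x)\in[(1+\eps')^k/n,(1+\eps')^{k+1}/n]\}$ is not implementable in this setting, because $q$ is only available through samples. An element with $q(x)\approx 1/n$ is seen $O(n^{-1/3})$ times in expectation among $O(n^{2/3}\eps^{-4}\log n)$ samples, so you cannot decide which bucket it belongs to, let alone draw from the conditional distributions $p_{B_k}, q_{B_k}$ by rejection sampling. That bucketing device is the engine of the \emph{identity} tester of \cite{L1indep}, where $q$ is explicitly known (and which this paper cites separately for the $O(\sqrt{n}\,\eps^{-2}\log n)$ bound); it does not transfer to the two-unknown-distributions problem. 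The actual proof of the present theorem dispenses with bucketing entirely: after the sieve, it runs the $\ell_2$ collision test \emph{once} on the entire light part, using only the global bound $b\lesssim n^{-2/3}$ and a single Cauchy--Schwarz step with the full $\sqrt{n}$ factor; the target $\ell_2$ scale is therefore $\eps/\sqrt{n}$ rather than $\eps$, and the $n$-dependence of the resulting sample bound cancels against the $\sqrt{b}$ gain, which is also where the stated $\eps^{-4}$ (rather than your ``resolve $\|p-q\|_2^2$ to scale $\eps^2$'' accounting) actually comes from. With the bucketing step deleted and the $\ell_2$ lemma applied globally to the light part, your outline becomes the correct argument.
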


Alternatively, via a simple Chernoff bound analysis similar to \cite{L1indep},
we can show that a whole distribution can be approximated efficiently, giving
us another closeness tester for $\ell_1$.

\begin{lemma}%\mnote{remove theorem?}
  Given access to samples from a distribution $p$ over $M$, where $|M|=n$, and
  parameters $\eps,\delta,t > 0$, there exists an algorithm
  that takes $O(t^{-1}\eps^{-2}\log n\log(1/\delta))$ samples and outputs a
  distribution $\tilde{p}$ over $M$ such that with probability at least $1-\delta$
  \[
  	(1-\eps)\max\{p(i),t\} \leq \tilde{p}(i) \leq (1+\eps)\max\{p(i),t\}
  \]
  for every $i \in M$.
\end{lemma}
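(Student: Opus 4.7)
The plan is to use the obvious empirical estimator and argue by Chernoff plus union bound. Concretely, set $N = c\,t^{-1}\eps^{-2}\log n\log(1/\delta)$ for a sufficiently large constant $c$, draw $N$ i.i.d.\ samples from $p$, let $\hat p(i)$ be the resulting empirical frequency of $i$, and define
\[
\tilde p(i) = \max\{\hat p(i),\,t\}.
\]
Note $N\cdot t = c\eps^{-2}\log n\log(1/\delta)$, which is the key quantity that drives all the Chernoff bounds below.

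Fix $i \in M$ and write $X_i \sim \mathrm{Binomial}(N,p(i))$ for the number of samples landing on $i$, so $\hat p(i) = X_i/N$ and $\mathbb{E}[X_i]=Np(i)$. I would split into two cases according to whether $p(i) \ge t$.

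If $p(i) \ge t$, then $\mathbb{E}[X_i] \ge Nt$, so the standard multiplicative Chernoff bound gives
\[
\Pr\bigl[|\hat p(i) - p(i)| > \eps\, p(i)\bigr] \le 2\,e^{-\Omega(\eps^2 p(i)N)} \le 2\,e^{-\Omega(\eps^2 tN)} \le \delta/n
\]
for $c$ large enough. On this event $(1-\eps)p(i) \le \hat p(i) \le (1+\eps)p(i)$; the definition $\tilde p(i)=\max\{\hat p(i),t\}$ then yields $\tilde p(i)\in[(1-\eps)p(i),(1+\eps)p(i)]$ by a two-line case split (if $\hat p(i)\ge t$ the claim is immediate; if $\hat p(i)<t$ then $t$ lies between $(1-\eps)p(i)$ and $p(i)\le(1+\eps)p(i)$).

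If $p(i) < t$, then $\tilde p(i) \ge t$ by construction, which already handles the lower bound $(1-\eps)t$. For the upper bound $(1+\eps)t$, I want to show $\hat p(i) \le (1+\eps)t$ with failure probability at most $\delta/n$. Writing $(1+\eps)Nt = (1+\lambda)\mathbb{E}[X_i]$ with $\lambda = (1+\eps)t/p(i)-1 \ge \eps$, the Chernoff bound $\Pr[X\ge(1+\lambda)\mu]\le e^{-\mu\lambda^2/(2+\lambda)}$ applies. If $\lambda \le 1$, the exponent is at least $\mu\eps^2/3 \ge Nt\eps^2/3 \cdot (p(i)/t)$; in the ``awkward'' subcase where $p(i)$ could be much smaller than $t$, $\lambda$ grows large so the exponent is at least $\mu\lambda/3 = ((1+\eps)Nt-Np(i))/3 \ge \eps Nt/3$. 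Either way the exponent is $\Omega(\eps^2 t N) = \Omega(\log n \log(1/\delta))$, giving failure probability at most $\delta/n$.

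A union bound over the $n$ values of $i$ now shows the desired inequality holds simultaneously for every $i\in M$ with probability at least $1-\delta$. The only mildly delicate step is the case $p(i)<t$, where one must check that both the ``$p(i)$ close to $t$'' and the ``$p(i)\ll t$'' regimes yield an exponent $\Omega(\log n\log(1/\delta))$ in the Chernoff bound; the rest is bookkeeping.
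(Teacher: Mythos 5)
Your proof is correct and follows exactly the route the paper intends: the paper gives no explicit proof of this lemma, asserting only that it follows ``via a simple Chernoff bound analysis similar to \cite{L1indep},'' and your estimator $\tilde p(i)=\max\{\hat p(i),t\}$ with the two-case multiplicative Chernoff argument and a union bound over the $n$ domain elements is precisely that analysis. The only cosmetic caveat is that $\tilde p$ as you (and the paper) define it need not sum to $1$, so ``distribution'' in the statement should be read as ``nonnegative function''; this is a looseness in the lemma itself, not in your argument.
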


As a result, we can simply estimate $p$ and $q$ with $\tilde{p}$ and $\tilde{q}$,
respectively, and compute their $\ell_1$ distance to get the following alternative
tester, which gives us a different trade-off between $n$ and $\eps$ that will
become useful later.

\begin{theorem}\label{naiveL1tester}\mnote{combine w Th. \ref{complexL1tester}?}
  Given access to samples from two distributions $p$ and $q$ over M, where $|M|=n$,
  there exists an algorithm that takes $O(n\eps^{-2}\log n\log(1/\delta))$
  samples and (1) accepts with probability at least $1-\delta$ if $p=q$
  and (2) rejects with probability at least $1-\delta$ if $\left\|p-q\right\|_1 > \eps$.
\end{theorem}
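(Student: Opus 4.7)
The plan is a direct reduction via $\ell_1$ learning. The tester would draw independent samples from $p$ and $q$ and form empirical approximations $\tilde p$ and $\tilde q$ satisfying $\|\tilde p - p\|_1 \le \eps/4$ and $\|\tilde q - q\|_1 \le \eps/4$, each with probability at least $1 - \delta/2$; it then returns \emph{accept} iff $\|\tilde p - \tilde q\|_1 \le \eps/2$. By the triangle inequality and a union bound, with probability at least $1 - \delta$ the computed quantity $\|\tilde p - \tilde q\|_1$ lies within $\eps/2$ of the true $\|p - q\|_1$, so $p = q$ forces the computed value to be $\le \eps/2$ and $\|p - q\|_1 > \eps$ forces it to exceed $\eps/2$. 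Correctness follows in both cases.

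The only remaining piece is to produce $\tilde p$ and $\tilde q$ of the required $\ell_1$ accuracy within the stated sample budget. One route is to invoke the preceding lemma with threshold $t = \Theta(\eps/n)$ and accuracy parameter $\eps_0 = \Theta(\eps)$: on coordinates with $p(i) \ge t$ its per-coordinate guarantee yields error at most $\eps_0 p(i)$ (contributing at most $\eps_0$ in total), and on the remaining coordinates the error is at most $(1+\eps_0) t$ (contributing at most $(1+\eps_0) t n$), giving $\|\tilde p - p\|_1 = O(\eps)$. A cleaner alternative is a direct Chernoff or variance analysis of the empirical distribution over $[n]$, which is well known to give the target $O(n\eps^{-2}\log n\log(1/\delta))$ rate for $\ell_1$ learning of an arbitrary distribution.

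I do not expect any conceptual obstacle beyond parameter tuning: one must simultaneously control the \emph{heavy} and \emph{light} contributions to $\|\tilde p - p\|_1$ while matching the advertised sample bound. Once suitable parameters are chosen, doubling the sample cost for the second distribution and setting each invocation's confidence to $\delta/2$ delivers the claimed complexity and completes the argument.
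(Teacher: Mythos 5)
Your proposal is correct and is essentially the paper's own (one-line) proof: learn $p$ and $q$ to $\ell_1$ accuracy $O(\eps)$ and threshold the empirical distance $\|\tilde p-\tilde q\|_1$ at $\eps/2$. One quantitative caveat you half-anticipate: the route through the preceding lemma with threshold $t=\Theta(\eps/n)$ and accuracy $\eps_0=\Theta(\eps)$ costs $O(t^{-1}\eps_0^{-2}\log n\log(1/\delta))=O(n\eps^{-3}\log n\log(1/\delta))$ samples (the light coordinates force $tn=O(\eps)$), an extra $1/\eps$ factor over the stated bound, so it is your second route --- the direct variance/Cauchy--Schwarz analysis giving expected $\ell_1$ error at most $\sqrt{n/m}$ from $m$ samples --- that actually achieves the advertised $O(n\eps^{-2}\log n\log(1/\delta))$.
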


%==== SECTION =====================================================================
\section{Closeness tester}\label{section:41}

In this section, we consider the EMD-closeness testing problem when the
domain is $M \subset [0, \Delta]^d$. The main idea behind the algorithm is to
embed EMD into the $\ell_1$ metric and use an $\ell_1$-closeness tester (Theorems
\ref{complexL1tester} and \ref{naiveL1tester}) to test the
resulting distributions. Recall from the preliminaries, $p^{(i)}$ and $q^{(i)}$ are
the $i$-coarsening approximations of $p$ and $q$. We have the following algorithm,
where the subroutine \FuncSty{$\ell_1$-Closeness-Tester}$(p,q,\eps,\delta)$ is an
$\ell_1$-closeness tester on distributions $p$ and $q$ with distance parameter $\eps$
and failure probability $\delta$.

%\mbox{
\FuncSty{EMD-Closeness-Tester$(p, q, \eps)$}\\
\begin{algorithm}[H]

  \For {$i = 1$ \KwTo $\log(2\Delta d/\eps)$} {
    \If {\FuncSty{$\ell_1$-Closeness-Tester}$(p^{(i)}$, $q^{(i)},
    				\frac{\eps 2^{i-2}}{\Delta d \log(2\Delta d/\eps)},
    				\frac{1}{3\log(2\Delta d/\eps)})$ rejects} {
      \KwSty{reject}
    }
  }
  \KwSty{accept}
\end{algorithm}
%}

Note that our subroutine takes advantage of whichever tester (Theorem \ref{complexL1tester}
or \ref{naiveL1tester}) requires fewer samples. Specifically, when $d$ is small,
the domains of the $i$-coarsenings of $p$ and $q$ are small and $\eps$ is the bottleneck,
so we use Theorem \ref{naiveL1tester}. On the other hand, if $d$ is large, the sizes of
these domains become the bottleneck and we use Theorem \ref{complexL1tester}. This gives us
the following theorem.

\begin{theorem}
  The above is an EMD-closeness tester for distributions over $M \subset [0,\Delta]^d$ that
  takes $\tilde{O}((2\Delta d/\eps)^{2d/3})$ samples when $d\geq 6$ and
  $\tilde{O}((\Delta/\eps)^{2})$ samples when $d\le2$.
\end{theorem}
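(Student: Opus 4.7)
The plan is to verify both correctness and sample complexity separately, and in the correctness half there are two directions to handle.

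\textbf{Completeness.} First I would show that if $p=q$, the algorithm accepts with probability at least $2/3$. Since $p=q$ implies $p^{(i)}=q^{(i)}$ for every $i$, each invocation of \FuncSty{$\ell_1$-Closeness-Tester} accepts with probability at least $1-\frac{1}{3\log(2\Delta d/\eps)}$. A union bound over the $\log(2\Delta d/\eps)$ iterations then gives overall acceptance probability at least $2/3$.

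\textbf{Soundness.} Suppose $EMD(p,q)>\eps$. By Lemma \ref{lemma:bindiv},
\[
    \eps < EMD(p,q) \le d\sum_{i=1}^{\log(2\Delta d/\eps)} \frac{\Delta}{2^{i-1}} \bigl\| p^{(i)} - q^{(i)} \bigr\|_1 + \frac{\eps}{2},
\]
so $\sum_i \frac{\Delta d}{2^{i-1}} \| p^{(i)} - q^{(i)} \|_1 > \eps/2$. By pigeonhole, some index $i^*$ satisfies $\frac{\Delta d}{2^{i^*-1}} \| p^{(i^*)} - q^{(i^*)} \|_1 > \frac{\eps/2}{\log(2\Delta d/\eps)}$, i.e., $\| p^{(i^*)} - q^{(i^*)} \|_1 > \frac{\eps 2^{i^*-2}}{\Delta d \log(2\Delta d/\eps)}$. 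This is exactly the distance parameter passed to the tester at level $i^*$, so that call rejects with probability at least $1-\frac{1}{3\log(2\Delta d/\eps)} \ge 2/3$, and the overall algorithm rejects.

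\textbf{Sample complexity.} At level $i$, the grid $G^{(i)}$ has at most $n_i = (2^i)^d = 2^{id}$ cells intersecting $[0,\Delta]^d$, and the distance parameter is $\eps_i = \Theta\!\bigl(\eps 2^i / (\Delta d \log(\Delta d/\eps))\bigr)$; the failure parameter contributes only a $\log\log$ factor hidden in $\tilde O$. Plugging into Theorem \ref{complexL1tester} gives $\tilde O\bigl(n_i^{2/3}\eps_i^{-4}\bigr) = \tilde O\bigl(2^{i(2d/3-4)}(\Delta/\eps)^4\bigr)$ samples, while Theorem \ref{naiveL1tester} gives $\tilde O\bigl(n_i \eps_i^{-2}\bigr) = \tilde O\bigl(2^{i(d-2)}(\Delta/\eps)^2\bigr)$. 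When $d\ge 6$, the exponent $2d/3-4$ in the first bound is nonnegative, so the first bound is maximized at $i=\log(2\Delta d/\eps)$, yielding $\tilde O((2\Delta d/\eps)^{2d/3})$. When $d\le 2$, the exponent $d-2$ in the second bound is nonpositive, so that bound is maximized at $i=1$, yielding $\tilde O((\Delta/\eps)^2)$. Taking the minimum of the two subroutines at each level and summing over the $O(\log(\Delta d/\eps))$ levels (absorbed into $\tilde O$) gives the claimed complexities.

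The main obstacle is really just getting the two-regime analysis cleanly: in particular, verifying that $2d/3-4\ge 0$ when $d\ge 6$ (so the Theorem \ref{complexL1tester} bound is geometric in $i$ and dominated by the top level) and that $d-2\le 0$ when $d\le 2$ (so the Theorem \ref{naiveL1tester} bound is dominated by the bottom level). The remaining work is bookkeeping of $\mathrm{polylog}$ factors hidden inside $\tilde O$.
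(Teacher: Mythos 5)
Your proposal is correct and follows essentially the same route as the paper: completeness by a union bound over the $\log(2\Delta d/\eps)$ levels, soundness via Lemma \ref{lemma:bindiv} plus pigeonhole to find a level $i^*$ whose $\ell_1$ gap exceeds the tester's distance parameter, and the same two-regime maximization of $\tilde O(n_i^{2/3}\eps_i^{-4})$ at the top level for $d\ge 6$ versus $\tilde O(n_i\eps_i^{-2})$ at $i=1$ for $d\le 2$. The only difference is cosmetic bookkeeping (you absorb the constant $d$ and the log factors into $\tilde O$ slightly earlier than the paper does).
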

\begin{proof}
  If $p=q$, then $p^{(i)}=q^{(i)}$ for all $i$, so by the union bound, the probability that
  the algorithm rejects is at most $\log{\frac{2\Delta d}{\eps}} \cdot \frac{1}{3\log\frac{2\Delta d}{\eps}} = \frac{1}{3}$. 

  If, on the other hand, $EMD(p,q) > \eps$, then by Lemma {\ref{lemma:bindiv}},
  \[
    d\left(\sum_{i=1}^{\log(2\Delta d/\eps)} \frac{\Delta}{2^{i-1}}\cdot \left\| p^{(i)} - q^{(i)}\right\|_1 \right)
    	> \frac{\eps}{2}.
  \]
  It follows by the pigeonhole principle that there exists an index $i$ such that
  \[
  	\left\| p^{(i)} - q^{(i)}\right\|_1
  		> \frac{\eps2^{i-2}}{\Delta d\log(2\Delta d/\eps)}.
  \]
  Hence, for that index $i$, the $\ell_1$-closeness tester in Step 2 will reject (with
  probability $2/3$).
  
  Now let us analyze the number of samples the algorithm needs. In the $i^{th}$ iteration of
  the main loop, $p^{(i)}$ and $q^{(i)}$ has a domain with $n_i=2^{di}$ elements, and we
  need to run an $\ell_1$-closeness tester with a distance parameter of
  $\eps_i = \frac{\eps2^{i-2}}{\Delta\log(2\Delta d/\eps)}$. Consider the following
  two cases:
  \begin{itemize}
    \item $d \geq 6$: Using the algorithm of Theorem \ref{complexL1tester}, we get a sample
    complexity of
    \[
    	\tilde{O}(n_i^{2/3}\eps_i^{-4})
    		= \tilde{O}\left(2^{(2d/3-4)i}\left(\frac{4\Delta d\log(2\Delta d/\eps)}{\eps}\right)^4\right).
    \]
    This quantity is maximized when $i = \log(2\Delta d/\eps)$, which gives us a total
    complexity of $\tilde{O}((2\Delta d/\eps)^{2d/3})$.
    
    \item $d \leq 2$: Using the algorithm of Theorem \ref{naiveL1tester}, we get a sample
    complexity of
    \[
    	\tilde{O}(n_i\eps_i^{-2})
    		= \tilde{O}\left(2^{(d-2)i}\left(\frac{4\Delta d\log(2\Delta d/\eps)}{\eps}\right)^2\right).
    \]
    This quantity is maximized when $i=1$, giving us a total complexity of
    $\tilde{O}((\Delta/\eps)^2)$.
  \end{itemize}
\end{proof}

If one of the distributions is explicitly known, we can use the corresponding $\ell_1$-closeness
tester with sample complexity $O(n^{1/2}\eps^{-2}\log n)$ from \cite{L1indep} to
similarly get the following theorem.

\begin{theorem}
	There exists an EMD-closeness tester for distributions over $M \subset [0,\Delta]^d$,
	where one is explicitly known, that takes $\tilde{O}((2\Delta/\eps)^{d/2})$ samples
	when $d\ge4$.
\end{theorem}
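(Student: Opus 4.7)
The plan is to reuse the algorithmic framework of the previous theorem almost verbatim, but swap the inner $\ell_1$-closeness subroutine for the known-distribution tester of~\cite{L1indep}, which uses only $O(\sqrt{n}\,\eps^{-2}\log n)$ samples. Since $q$ is given explicitly, each coarsening $q^{(i)}$ can be computed exactly, so this tester applies. First I would iterate $i = 1,\dots,\log(2\Delta d/\eps)$ and, at each level, invoke the known-distribution tester on $p^{(i)}$ versus $q^{(i)}$ with distance parameter $\eps_i = \frac{\eps 2^{i-2}}{\Delta d \log(2\Delta d/\eps)}$ and failure probability $\frac{1}{3\log(2\Delta d/\eps)}$, rejecting as soon as any inner test rejects.

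Correctness would be inherited directly from the argument used for the previous theorem. If $p=q$, then $p^{(i)} = q^{(i)}$ at every level, so a union bound over the $\log(2\Delta d/\eps)$ iterations caps the false rejection probability at $1/3$. If $EMD(p,q) > \eps$, Lemma~\ref{lemma:bindiv} combined with the pigeonhole principle forces some index $i$ to satisfy $\|p^{(i)} - q^{(i)}\|_1 > \eps_i$, at which point the corresponding invocation of the inner tester rejects with probability at least $2/3$.

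The only substantive new piece is the complexity computation. Level $i$ has domain size $n_i = 2^{di}$, so the per-level cost becomes
\[
\tilde{O}\!\left(n_i^{1/2}\eps_i^{-2}\right)
= \tilde{O}\!\left(2^{(d/2-2)i}\left(\frac{\Delta d \log(2\Delta d/\eps)}{\eps}\right)^{\!2}\right).
\]
When $d \geq 4$, the exponent $d/2 - 2$ is nonnegative, so this quantity is maximized at the top level $i = \log(2\Delta d/\eps)$, producing an overall bound of $\tilde{O}((2\Delta/\eps)^{d/2})$ after absorbing $\poly(d)$ and polylogarithmic factors into $\tilde{O}$.

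The hard part, such as it is, is just pinning down the $d \geq 4$ threshold: this is precisely where the per-level cost becomes non-decreasing in $i$, shifting the bottleneck from the coarsest grid to the finest. For $d \leq 3$ the maximum would instead occur at $i = 1$ and the bound would collapse to $\tilde{O}((\Delta/\eps)^2)$, so the stated regime $d \geq 4$ is exactly the one in which substituting the $\sqrt{n}$-tester yields a genuine improvement over the previous theorem.
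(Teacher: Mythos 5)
Your proposal is correct and is exactly the argument the paper intends: the paper itself only remarks that one should "similarly" rerun the closeness-tester framework with the $O(n^{1/2}\eps^{-2}\log n)$ known-distribution tester of the independence-testing paper substituted for the inner subroutine, and your correctness and complexity analysis (including identifying $d\ge 4$ as the threshold where the exponent $d/2-2$ becomes nonnegative and the finest level dominates) fills in precisely the details the paper omits.
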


%==== SUBSECTION ==================================================================
\section{Additive-error estimation}

We have seen that in $\ell_1$-closeness testing, sometimes it is to our advantage to
simply estimate each probability value, rather than use the more sophisticated
algorithm of \cite{L1tester}. This seemingly naive approach has another advantage:
it gives an actual numeric estimate of the distances, instead of just an
accept/reject answer. Here, we use this approach to obtain an additive approximation
of the EMD of two unknown distributions over $M \subset [0,\Delta]^d$ as follows.

\FuncSty{EMD-Approx}$(p,q,\eps)$\\
\begin{algorithm}[H]
	Let $G$ be the grid on $[0,\Delta]^d$ with side length $\frac{\eps}{4d}$,
		and let $P$ and $Q$ be the distributions induced by $p$ and $q$ on $G$, with
		weights in each cell concentrated at the center\;
	Take $O((4d\Delta/\eps)^{d+2})$ samples from $P$ and $Q$, and let $\tilde{P}$
		and $\tilde{Q}$ be the resulting empirical distributions\;
	\Return{$EMD(\tilde{P},\tilde{Q})$}
\end{algorithm}

\begin{theorem}
	\FuncSty{EMD-Approx} takes $O((4d\Delta/\eps)^{d+2})$ samples from $p$ and
	$q$ and, with probability $2/3$, outputs an $\eps$-additive approximation of $EMD(p,q)$.
\end{theorem}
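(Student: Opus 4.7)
The plan is to decompose the error via repeated application of the triangle inequality for EMD, isolating a deterministic \emph{discretization} error (from replacing $p,q$ by the grid distributions $P,Q$) and a probabilistic \emph{sampling} error (from replacing $P,Q$ by the empirical distributions $\tilde P,\tilde Q$). Thinking of all four distributions as living on $[0,\Delta]^d$ under the $\ell_1$ metric, EMD is a metric, so
\[
|EMD(p,q) - EMD(\tilde P,\tilde Q)| \le EMD(p,P) + EMD(q,Q) + EMD(P,\tilde P) + EMD(Q,\tilde Q),
\]
and it suffices to bound each of the four summands by $\eps/4$ with sufficient probability.

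For the discretization terms, each cell of $G$ is a hypercube of side length $\eps/(4d)$ and hence has $\ell_1$ diameter $\eps/4$. The ``move-to-center'' mapping, which transports all mass of $p$ inside each cell to that cell's center, is a satisfying flow from $p$ to $P$ in which no probability travels farther than $\eps/4$, so its cost is at most $\eps/4$. Hence $EMD(p,P) \le \eps/4$, and the same argument gives $EMD(q,Q) \le \eps/4$.

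For the sampling terms, observe that $P$ and $\tilde P$ are both supported on the set of grid centers, which lies in $[0,\Delta]^d$ and therefore has $\ell_1$ diameter at most $d\Delta$. Corollary \ref{cor:EMDvsL1} then gives $EMD(P,\tilde P) \le (d\Delta/2)\|P-\tilde P\|_1$, so it suffices to show $\|P-\tilde P\|_1 \le \eps/(2d\Delta)$ with constant probability, and similarly for $Q,\tilde Q$. This is the main technical step: a Cauchy--Schwarz computation on the per-coordinate binomial variances yields $E[\|P-\tilde P\|_1] \le \sqrt{n/s}$, where $n = (4d\Delta/\eps)^d$ is the number of cells and $s$ is the sample size. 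Choosing $s = c\cdot(4d\Delta/\eps)^{d+2}$ for a sufficiently large constant $c$ makes $\sqrt{n/s} \le \eps/(24d\Delta)$, after which Markov's inequality on $\|P-\tilde P\|_1$ (together with a union bound over the empirical estimates of $P$ and $Q$) yields overall success probability at least $2/3$. Summing the four bounds then gives the claimed $\eps$-additive approximation. The only delicate point is that Corollary \ref{cor:EMDvsL1} loses a factor of the diameter $d\Delta$, which forces the $\ell_1$ accuracy we need to scale like $\eps/(d\Delta)$ rather than $\eps$; combined with the $(4d\Delta/\eps)^d$ domain size, this is exactly what produces the two extra factors in the sample complexity.
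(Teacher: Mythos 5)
Your proposal is correct, and the overall skeleton -- triangle inequality splitting the error into a deterministic discretization part ($EMD(p,P), EMD(q,Q) \le \eps/4$ from the cell $\ell_1$-diameter $\eps/4$) and a probabilistic sampling part controlled through Corollary~\ref{cor:EMDvsL1} via $\|P-\tilde P\|_1 \le \eps/(2d\Delta)$ -- matches the paper exactly. Where you genuinely diverge is in how you certify the $\ell_1$ closeness of the empirical distribution: the paper applies a Chernoff bound to every one of the $2^{(4d\Delta/\eps)^d}$ subsets $S \subseteq G$, requiring each $|P(S)-\tilde P(S)| \le \eps/(4d\Delta)$ with failure probability $2^{-(4d\Delta/\eps)^d}/3$, takes a union bound over all subsets, and then uses the total-variation identity $\|P-\tilde P\|_1 = 2\max_S |P(S)-\tilde P(S)|$. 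You instead bound $\E[\|P-\tilde P\|_1] \le \sqrt{n/s}$ directly via per-cell binomial variances and Cauchy--Schwarz, then apply Markov; the arithmetic checks out ($\sqrt{n/s} = \eps/(4d\Delta\sqrt{c})$, Markov gives failure $1/12$ per distribution, $1/6$ total). Both routes land on the same $O((4d\Delta/\eps)^{d+2})$ sample complexity -- the exponential union bound costs nothing extra because the Chernoff exponent is already $\Theta(n)$ at that sample size -- but your first-moment argument is the more elementary and arguably cleaner one, avoiding the slightly alarming union bound over exponentially many events, at the price of a constant-factor (Markov-type) rather than exponentially small failure probability, which is all the theorem asks for anyway.
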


\begin{proof}
Note that a sample from $p$ or $q$ gives us a sample from $P$ or $Q$, respectively, so it remains to prove correctness.
Observe that $|G| = (4d\Delta/\eps)^d$, so $G$ has $2^{(4d\Delta/\eps)^d}$ subsets. By the Chernoff bound, with the $O((4d\Delta/\eps)^{d+2})$ samples from $p$, we can guarantee for each $S \subseteq G$ that
$|P(S) - \tilde{P}(S)| > \frac{\eps}{4d\Delta}$ with probability at most $2^{-(4d\Delta/\eps)^d}/3$. By the union bound, with probability at least $2/3$, all subsets of $G$ will be approximated to within an additive $\frac{\eps}{4d\Delta}$. In that case,
\begin{eqnarray*}
	\|P - \tilde{P}\|_1
		&=&\sum_{c\in G} |P(c) - \tilde{P}(c)|\\
		&=& 2\max_{S \subseteq G}{|P(S) - \tilde{P}(S)|} \le \frac{\eps}{2d\Delta}.
\end{eqnarray*}

We then have, by Corollary \ref{cor:EMDvsL1},
$EMD(P,\tilde{P}) \le \eps/4$. Further, since each cell has radius $\eps/4$,
we have $EMD(p,P) \le \eps/4$, giving us by the triangle inequality, $EMD(p,\tilde{P}) \le \eps/2$.
Similarly, $EMD(q,\tilde{Q}) \le \eps/2$, so again by triangle inequality, we get
\[
	|EMD(p,q) - EMD(\tilde{P},\tilde{Q})| \le EMD(p,\tilde{P}) + EMD(q,\tilde{Q}) = \eps,
\]
completing our proof.
\end{proof}

%==== SUBSECTION =====================================================================
\section{Lower bounds}

We can show that our tester is optimal for the 1-dimensional domain by a simple argument:

\begin{theorem}
 	Let $\A$ be an EMD-closeness tester of distributions over any domain $M \subset [0, \Delta]$.
 	Then $\A$ requires $\Omega((\Delta/\eps)^2)$ samples.
\end{theorem}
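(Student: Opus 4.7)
The plan is to exhibit a pair of input instances---one with $p=q$ and one with $EMD(p,q) > \eps$---that are statistically indistinguishable to any algorithm taking $o((\Delta/\eps)^2)$ total samples. Since the statement quantifies over all $M \subset [0,\Delta]$, I would take the worst-case domain to be the two-point set $M = \{0,\Delta\}$, on which EMD coincides with a scalar multiple of $\ell_1$ by Corollary~\ref{cor:EMDvsL1}.

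Set $\delta' = \eps/\Delta$ (assuming $\eps \le \Delta/2$, else the bound is $O(1)$ and trivial). The yes-instance takes $p = q$ to be uniform on $\{0,\Delta\}$. The no-instance takes $p(0) = 1/2 + \delta'$, $p(\Delta) = 1/2 - \delta'$, and $q$ with the two values swapped. Since both the minimum distance and the diameter of $M$ equal $\Delta$, Corollary~\ref{cor:EMDvsL1} collapses to give $EMD(p,q) = \frac{||p-q||_1}{2}\cdot \Delta = 2\delta'\Delta = 2\eps$, so the no-instance is more than $\eps$ apart.

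The remaining step is to bound the statistical distance between the two instances when $\A$ sees $k_p$ samples from $p$ and $k_q$ samples from $q$. Each sample is Bernoulli on $\{0,\Delta\}$, and by a standard Taylor expansion $\mathrm{KL}(\mathrm{Bern}(1/2)\,\|\,\mathrm{Bern}(1/2 \pm \delta')) = \Theta(\delta'^2)$. Tensorizing over the $k_p + k_q$ independent samples yields a KL divergence of $\Theta((k_p + k_q)\delta'^2)$ between the two joint sample laws, and hence by Pinsker's inequality a total variation distance of $O(\sqrt{k_p+k_q}\,\delta')$. Since $\A$ must accept the yes-instance and reject the no-instance each with probability at least $2/3$, the data-processing inequality forces this TV distance to be $\Omega(1)$, yielding $k_p + k_q = \Omega(1/\delta'^2) = \Omega((\Delta/\eps)^2)$. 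I do not expect any real obstacle---this is the classical two-sample bias-detection lower bound, coupled with the fact that on a two-point metric EMD is exactly proportional to $\ell_1$; the only minor care needed is to apply KL tensorization to the joint distribution of samples from both sources (rather than one marginal), which is immediate.
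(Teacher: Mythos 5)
Your proposal is correct and takes essentially the same route as the paper: a two-point domain $\{0,\Delta\}$ on which EMD is proportional to $\ell_1$, reduced to distinguishing fair from $\eps/\Delta$-biased coins. The only difference is that the paper simply cites the classic $\Omega(1/\delta'^2)$ biased-coin bound, whereas you spell it out via KL tensorization and Pinsker (and your symmetric perturbation makes $EMD(p,q)=2\eps$, which cleanly satisfies the strict inequality $EMD(p,q)>\eps$).
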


\begin{proof}
Consider two distributions $p$ and $q$ over the domain $\{0,\Delta\}$, where $p$ is the distribution that puts the weight of $\frac{1}{2}$ at $0$ and $\Delta$ and $q$ is the distribution that puts the weight of $1/2 + \eps/\Delta$ at $0$ and the weight of $1/2 - \eps/\Delta$ at $\Delta$. Clearly $EMD(p,q) = \eps$, and it is a classic result that distinguishing $p$ from $q$ requires $\Omega((\Delta/\eps)^2)$ samples.
\end{proof}

Clearly this also implies the same lower bound for 2-dimensional domains, making our
algorithm optimal in those cases. Next we prove that our $d$-dimensional tester is
also essentially optimal in its dependence on $\Delta/\eps$.

\begin{theorem}
	There is no EMD-closeness tester that works on any $M \subset [0,\Delta]^d$ that
	takes $o((\Delta/\eps)^{2d/3})$ samples.
\end{theorem}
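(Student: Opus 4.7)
The plan is to prove this lower bound by reducing from the known $\Omega(n^{2/3})$ lower bound for $\ell_1$-closeness testing of distributions over an abstract domain of size $n$ (with a constant distance parameter $\eps'$, say $\eps' = 1/2$), due to \cite{L1tester, valiant}. The bridge between the two problems is Corollary \ref{cor:EMDvsL1}, which says that on a metric space with minimum pairwise distance $\delta$, $EMD(p,q) \ge \frac{\|p-q\|_1}{2} \cdot \delta$.

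Concretely, I would embed an abstract domain of size $n$ into $[0,\Delta]^d$ as a regular grid of $n$ points inside $[0,\Delta]^d$, with pairwise minimum distance $\delta = \Theta(\Delta/n^{1/d})$. Given any pair of distributions $p',q'$ on the abstract domain realizing the $\ell_1$ lower bound (so that either $p' = q'$ or $\|p'-q'\|_1 > 1/2$), let $p,q$ be the corresponding distributions supported on the grid. By the corollary, the $p=q$ case is preserved, while in the far case we get $EMD(p,q) > \delta/4$. Choose $n$ so that $\delta/4 \ge \eps$, i.e., take $n = \Theta((\Delta/\eps)^d)$. Any sample from $p$ or $q$ is also a sample from $p'$ or $q'$ (we simply read off which grid point it is), so an EMD-closeness tester with parameter $\eps$ and sample complexity $s$ yields an $\ell_1$-closeness tester with parameter $1/2$ and sample complexity $s$. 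The $\Omega(n^{2/3}) = \Omega((\Delta/\eps)^{2d/3})$ lower bound on the latter then transfers.

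The main obstacle is simply lining up the parameters so that the $\ell_1$ promise ($\|p'-q'\|_1 > \eps'$ for constant $\eps'$) translates cleanly into the EMD promise ($EMD(p,q) > \eps$) under the same embedding; this is essentially a matter of picking $n$ so that the induced minimum grid spacing is $\Theta(\eps)$, after which everything follows from Corollary \ref{cor:EMDvsL1}. A minor technical point to verify is that the $\ell_1$ lower bound being cited indeed holds for constant distance parameter on domains of arbitrary size $n$, which is the regime established by Valiant \cite{valiant}; once that is in hand, the reduction is direct and no further work is needed.
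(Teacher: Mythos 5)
Your proposal is correct and is essentially the same argument as the paper's: both reduce from the $\Omega(n^{2/3})$ lower bound for $\ell_1$-closeness testing by embedding an $n$-point domain as a grid in $[0,\Delta]^d$ with spacing $\Theta(\Delta n^{-1/d})$ and invoking Corollary \ref{cor:EMDvsL1} to translate the $\ell_1$ gap into an EMD gap. The only cosmetic difference is the direction of the parameter bookkeeping — the paper fixes $n$ and sets the EMD parameter to half the grid spacing, while you fix $\eps$ and solve for $n=\Theta((\Delta/\eps)^d)$ — which yields the identical bound.
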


\begin{proof}
	Suppose $\A$ is an EMD-closeness tester that requires only $o((\Delta/\eps)^{2d/3})$
	samples. Then consider the following $\ell_1$-closeness tester for $\eps=1$:

	\FuncSty{$\ell_1$-Tester$(p,q,\eps=1)$}\\
	\begin{algorithm}[H]
		Let $G$ be a grid on $[0,\Delta]^d$ with side length $\Delta n^{-1/d}$\;
		Let $f$ be an arbitrary injection from $[n]$ into the lattice points of $G$\;
		Let $P$ and $Q$ be distributions on the lattice points of $G$ induced by
			$f$ on $p$ and $q$, resp.\;
		\Return{$\A(P,Q,\frac{1}{2} \Delta n^{-1/d})$}\;
	\end{algorithm}

	Correctness is easy to see: if $p=q$, then clearly $P=Q$ as well and the tester
	accepts; alternatively, if $\|p-q\|_1 = 1$, then by Corollary \ref{cor:EMDvsL1}
	and the observation that $\|P-Q\|_1 = \|p-q\|_1$,
	\[
		EMD(P,Q) \ge \frac{\|P-Q\|_1}{2} \cdot \Delta n^{-1/d}
			=      \frac{1}{2} \Delta n^{-1/d},
	\]
	so the tester rejects, as required.

	Now, to take a sample from $P$ (or $Q$), we simply take a sample
	$x$ from $p$ (or $q$) and return $f(x)$. Hence, the sample complexity of this
	tester is
	\[
		o\left(\left(\frac{\Delta}{(\frac{1}{2} \Delta n^{-1/d}}\right)^{2d/3}\right)
			= o(n^{2/3}).
	\]
	But this contradicts the lower bound for $\ell_1$-closeness testing from
	\cite{L1tester,valiant}, completing our proof.
\end{proof}

If one of the distributions is known, we can similarly
use the weaker $\Omega(n^{1/2})$ lower bound for uniformity testing to obtain the
following theorem.

\begin{theorem}
	There is no EMD-closeness tester that works on any $M \subset [0,\Delta]^d$, where
	one of the input distributions is explicitly known, that takes $o((\Delta/\eps)^{d/2})$
	samples.
\end{theorem}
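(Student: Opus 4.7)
The plan is to essentially mimic the reduction from the previous theorem, but now reducing from the uniformity testing problem (which has an $\Omega(n^{1/2})$ lower bound) instead of general $\ell_1$-closeness testing. This is natural because uniformity testing is the special case of $\ell_1$-closeness testing where one of the distributions (the uniform one) is known, which matches the hypothesis of the theorem we want to prove.

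First, I would suppose for contradiction that there is an EMD-closeness tester $\A$ with known $q$ using only $o((\Delta/\eps)^{d/2})$ samples. Then I would build a uniformity tester for distributions on $[n]$ exactly as in the previous theorem: let $G$ be a grid on $[0,\Delta]^d$ with side length $\Delta n^{-1/d}$ chosen so that $G$ contains at least $n$ lattice points, fix an arbitrary injection $f:[n]\to G$, and let $P$, $Q$ be the push-forwards of $p$ and the uniform distribution under $f$. Crucially, $Q$ is fixed and explicit (uniform on the image of $f$), so it counts as the \emph{known} distribution that $\A$ needs. I would then run $\A(P,Q,\tfrac{1}{2}\Delta n^{-1/d})$ and return its answer; each sample from $P$ (resp.\ $Q$) is obtained from one sample of $p$ (resp.\ the uniform distribution) composed with $f$.

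The correctness check is immediate: if $p = U_{[n]}$, then $P = Q$, and if $\|p - U_{[n]}\|_1 > 1/2$ (the standard uniformity-testing threshold), then because $f$ is an injection and the lattice points have minimum pairwise distance $\Delta n^{-1/d}$, Corollary \ref{cor:EMDvsL1} gives
\[
    EMD(P,Q) \ge \frac{\|P-Q\|_1}{2}\cdot \Delta n^{-1/d} > \frac{1}{2}\Delta n^{-1/d},
\]
so $\A$ rejects. Plugging $\eps = \tfrac{1}{2}\Delta n^{-1/d}$ into the hypothesized sample complexity yields
\[
    o\!\left(\Bigl(\tfrac{\Delta}{\tfrac{1}{2}\Delta n^{-1/d}}\Bigr)^{d/2}\right) = o(n^{1/2}),
\]
contradicting the $\Omega(n^{1/2})$ lower bound for uniformity testing.

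The only real subtlety, and the main place to be careful, is in matching the testing threshold for uniformity testing with the $\ell_1$-parameter I use after the reduction: the classical lower bound is usually stated as distinguishing uniform from distributions that are $\Omega(1)$-far in $\ell_1$, so I need to pin down a constant $\ell_1$-gap (for example $1/2$) and propagate it through Corollary \ref{cor:EMDvsL1}. Everything else, including the observation that $\|P-Q\|_1 = \|p - U_{[n]}\|_1$ because $f$ is an injection and the fact that $Q$ is explicit, is essentially bookkeeping identical to the previous theorem.
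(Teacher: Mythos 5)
Your proposal is correct and is exactly the argument the paper intends: it only sketches this theorem by remarking that one should ``similarly use the weaker $\Omega(n^{1/2})$ lower bound for uniformity testing,'' and your reduction—reusing the grid embedding from the previous theorem with the known distribution pushed forward to an explicit $Q$, then propagating a constant $\ell_1$-gap through Corollary \ref{cor:EMDvsL1} to get $o(n^{1/2})$ samples—is precisely that. No issues.
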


%Additionally, we can use Theorem 2 of \cite{valiant} to obtain the following lower
%bound for estimating the EMD between two unknown distributions.
%
%\begin{theorem}
%	There is no algorithm that estimates the EMD between two unknown distributions on
%	any $M \subset [0,\Delta]^d$ to within an additive $\eps$ and take only
%	$(\Delta/\eps)^{d\eta}$ samples, for any $\eta < 1$.
%\end{theorem}
%
%\begin{proof}
%	Suppose $\A$ is such an algorithm. Consider the following algorithm that
%	distinguishes between distribution pairs with $\ell_1$-distance less than $1/4$ from
%	those with $\ell_1$-distance more than $3/4$:
%	
%	\FuncSty{Gap-$\ell_1$-Tester$(p,q)$}\\
%	\begin{algorithm}[H]
%		Let $G$ be a grid on $[0,\Delta]^d$ with side length $\Delta n^{-1/d}$\;
%		Let $f$ be an arbitrary injection from $[n]$ into the lattice points of $G$\;
%		Let $P$ and $Q$ be distributions on the lattice points of $G$ induced by
%			$f$ on $p$ and $q$, resp.\;
%		\lIf{$\A(P,Q) < ?$}{\KwSty{accept}}\;
%		\lElse{\KwSty{reject}}\;
%	\end{algorithm}
%	
%	First let us verify correctness. If $\|p-q\|_1 < 1/4$, then by Corollary
%	\ref{cor:EMDvsL1},
%	\[
%		EMD(P,Q) \le \frac{\|P-Q\|_1}{2}\cdot d\Delta
%							=  \frac{\|p-q\|_1}{2}\cdot d\Delta
%							<  \frac{1}{8}\cdot d\Delta.
%	\]
%	On the other hand, if $\|p-q\|_1 > 3/4$, then
%	\[
%		EMD(P,Q) \ge \frac{\|P-Q\|_1}{2}\cdot \Delta n^{-1/d}
%							=  \frac{\|p-q\|_1}{2}\cdot \Delta n^{-1/d}
%							>  \frac{3}{8}\cdot \Delta n^{-1/d}
%	\]
%\end{proof}

%==== SECTION =====================================================================
\section{Clusterable distributions}

Since the general technique of our algorithms is to forcibly divide the input
distributions into several small ``clusters,'' it is natural to consider what
improvements are possible when the distributions are naturally clusterable.
We obtain the following substantial improvement from the exponential dependence
on the dimension $d$ that we had in the general case.

\begin{theorem}\label{thm:knowncenters}
	If the combined support of distributions $p$ and $q$ can be partitioned into
	$k$ clusters of diameter $\eps/2$, and we are given the $k$ centers, then
	there exists an EMD-closeness tester for $p$ and $q$ that requires
	only $\tilde{O}(k^{2/3}(d\Delta/\eps)^4)$ samples.
\end{theorem}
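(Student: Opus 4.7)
The plan is to reduce EMD-closeness testing on $p,q$ to $\ell_1$-closeness testing on a pair of induced distributions over $[k]$, exploiting the fact that we know the $k$ cluster centers.

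First, let $M_1,\dots,M_k$ be the promised partition of the combined support, each of diameter at most $\eps/2$, and let $P$ and $Q$ be the distributions on $[k]$ induced by $p$ and $q$ under this partition. Since we are given the centers, we can classify any sample from $p$ (or $q$) into its containing cluster by assigning it to the nearest center; this converts a sample from $p$ into a sample from $P$ (similarly for $q$ and $Q$) at no extra sample cost.

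Next, I would apply Lemma~\ref{lem:buckets} with bucket-diameter $\Gamma=\eps/2$. The key point is that the ambient space $[0,\Delta]^d$ has $\ell_1$-diameter $d\Delta$, so the lemma yields
\[
  EMD(p,q) \;\le\; \tfrac{\|P-Q\|_1}{2}\cdot d\Delta \;+\; \tfrac{\eps}{2}.
\]
Consequently, if $p=q$ then $P=Q$ trivially, while if $EMD(p,q)>\eps$ then $\|P-Q\|_1 > \eps/(d\Delta)$. Thus an $\ell_1$-closeness tester for $P,Q$ with distance parameter $\eps/(d\Delta)$ is a valid EMD-closeness tester for $p,q$.

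Finally, I would invoke the $\ell_1$-closeness tester of Theorem~\ref{complexL1tester} on the $k$-element domain with $\eps_{\ell_1}=\eps/(d\Delta)$ and constant failure probability, which costs
\[
  O\!\left(k^{2/3}\bigl(d\Delta/\eps\bigr)^{4}\log k\right)
  \;=\; \tilde{O}\!\left(k^{2/3}(d\Delta/\eps)^{4}\right)
\]
samples, matching the stated bound. I do not expect any real obstacle: classifying samples is immediate given the centers, Lemma~\ref{lem:buckets} does the heavy lifting, and the $d^4$ factor is exactly what is lost by measuring the ambient diameter in $\ell_1$ rather than per coordinate. The only mild subtlety is remembering to use $d\Delta$ (not $\Delta$) as the ambient diameter when invoking Lemma~\ref{lem:buckets}.
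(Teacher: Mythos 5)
Your proposal matches the paper's proof essentially step for step: both reduce to $\ell_1$-closeness testing of the center-induced distributions $P,Q$ over $k$ points via Lemma~\ref{lem:buckets} with ambient diameter $d\Delta$ and bucket diameter $\eps/2$, then invoke the tester of Theorem~\ref{complexL1tester} with distance parameter $\eps/(d\Delta)$. The argument and the resulting $\tilde{O}(k^{2/3}(d\Delta/\eps)^4)$ bound are correct as stated.
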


\begin{proof}
Let us denote the set of centers by $\mathcal{C} = \{C_1,\dots,C_k\}$. Consider the
distributions $P$ and $Q$ on $\mathcal{C}$ induced by $p$ and $q$, respectively,
by assigning each point to its nearest center. If $EMD(p,q) > \eps$,
by Lemma \ref{lem:buckets}, $\frac{\|P-Q\|_1}{2} (d\Delta) > \eps/2$. We can,
of course, obtain samples from $P$ and $Q$ by sampling from $p$ and $q$, respectively, and
returning the nearest center. Our problem thus reduces to $\ell_1$-testing for
$(\frac{\eps}{d\Delta})$-closeness over $k$ points, which requires
$\tO(k^{2/3}(d\Delta/\eps)^4)$ samples using the $\ell_1$-tester from
\cite{L1tester}.
\end{proof}

If we do not assume knowledge of the cluster centers, then we are still able to
obtain the following only slightly weaker result.

\begin{theorem}\label{thm:unknowncenters}
	If the combined support of $p$ and $q$ can be partitioned into $k$ clusters
	of diameter $\eps/4$, then even without knowledge of the centers there exists
	an EMD-closeness tester for $p$ and $q$ that requires only
	$\tO(kd\Delta/\eps + k^{2/3}(d\Delta/\eps)^4) \le \tO(k(d\Delta/\eps)^4)$ samples.
\end{theorem}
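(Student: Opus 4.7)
The plan is to reduce to Theorem~\ref{thm:knowncenters} by first spending $\tO(kd\Delta/\eps)$ samples to learn an approximate set of centers, then invoking its $\ell_1$-testing step. Fix a threshold $\tau = \Theta(\eps/(kd\Delta))$ and call a cluster $C_i$ \emph{heavy} if $\max\{p(C_i), q(C_i)\} \ge \tau$, otherwise \emph{light}. There are at most $k$ heavy clusters, and the total $p$- (respectively $q$-) mass of the light clusters is at most $k\tau = O(\eps/(d\Delta))$, so a transport plan can move every light mass anywhere in $[0,\Delta]^d$ at total EMD cost $O(\eps)$. In other words, the light clusters contribute only to an additive error and we only need to ``know'' the heavy ones.

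To find the centers, draw $s = \tO(1/\tau) = \tO(kd\Delta/\eps)$ samples from each of $p$ and $q$. By coupon-collecting, every heavy cluster contributes at least one sample with probability at least $9/10$. Process the samples in arrival order in a furthest-first sieve: keep a new sample as a center iff it is more than $\eps/4$ from every currently kept center. Because any two points in a single cluster lie within $\eps/4$ of each other, each cluster contributes at most one surviving center, and so the final candidate set $\mathcal{C}$ satisfies $|\mathcal{C}| \le k$. Moreover, for every $y$ in a heavy cluster, some $c_j \in \mathcal{C}$ satisfies $d(y, c_j) \le \eps/2$: if the first sample from $y$'s cluster was kept, then $d(y, c_j) \le \eps/4$, while if instead it was absorbed into an existing $c_j$, the triangle inequality combined with the $\eps/4$ absorption threshold and the $\eps/4$ cluster diameter yields $d(y, c_j) \le \eps/2$. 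Let $P, Q$ be the distributions on $\mathcal{C}$ induced by the nearest-center map; samples from $P$ and $Q$ are produced for free from samples of $p$ and $q$.

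Now apply Lemma~\ref{lem:buckets} to the Voronoi partition of $\mathcal{C}$, using a weighted bookkeeping that separates Voronoi cells containing heavy-cluster mass (effective diameter $O(\eps)$) from the portion of $p$ and $q$ that sits in light clusters (total mass at most $2k\tau$ but possibly at distance up to $d\Delta$ from any center). The two contributions combine to give
\[
    EMD(p, q) \;\le\; \tfrac{\|P - Q\|_1}{2} \cdot d\Delta \;+\; O(\eps),
\]
so for a suitably small constant inside the $O(\cdot)$, $EMD(p, q) > \eps$ forces $\|P - Q\|_1 = \Omega(\eps/(d\Delta))$. Feeding $P, Q$ (supported on at most $k$ symbols) into the $\ell_1$-closeness tester of Theorem~\ref{complexL1tester} with distance parameter $\Theta(\eps/(d\Delta))$ and error $1/10$ uses $\tO(k^{2/3}(d\Delta/\eps)^4)$ samples. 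Summing both phases yields the claimed $\tO(kd\Delta/\eps + k^{2/3}(d\Delta/\eps)^4)$.

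The main technical difficulty is the diameter control in the final step: the furthest-first sieve can absorb a sample from one cluster into the center of a neighbouring cluster, so a heavy-cluster point may end up at distance $\eps/2$ rather than $\eps/4$ from its nearest surviving center. The hypothesis's tighter cluster diameter $\eps/4$ (versus $\eps/2$ in Theorem~\ref{thm:knowncenters}) is precisely the slack that compensates for this factor-of-two blow-up after absorption, while $\tau$ is tuned so that the light-cluster tail contributes only a lower-order additive term. The rest of the argument is direct bookkeeping plus the off-the-shelf $\ell_1$-tester invocation.
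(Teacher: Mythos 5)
Your proof is correct and follows the same two-phase structure as the paper's: spend $\tO(kd\Delta/\eps)$ samples to recover approximate centers (accepting that a small amount of probability mass goes unclustered and contributes only an $O(\eps)$ additive term), then reduce to the known-centers argument of Theorem~\ref{thm:knowncenters}, i.e., to $\ell_1$-closeness testing on at most $k$ symbols with distance parameter $\Theta(\eps/(d\Delta))$. The one genuine difference is that the paper treats the center-finding step as a black box, citing Algorithm~1 of Alon et al.~\cite{cluster}, whereas you re-derive it from scratch: your heavy/light split at threshold $\tau=\Theta(\eps/(kd\Delta))$, the coupon-collector bound, and the greedy $\eps/4$-separation sieve together reconstitute exactly that algorithm and its guarantee (at most $k$ surviving representatives, every point of a heavy cluster within $\eps/2$ of some representative, all but $O(\eps/(d\Delta))$ of the mass covered), at the same $\tO(kd\Delta/\eps)$ sample cost. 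This buys self-containedness for free. One caveat, which you flag yourself and which the paper's own write-up shares: after the nearest-center assignment the effective bucket diameter fed into Lemma~\ref{lem:buckets} is $\eps$ (two points each within $\eps/2$ of a common center) rather than $\eps/2$, so to make the total additive slack strictly less than $\eps$ one must tighten a constant somewhere (e.g., a sieve threshold of $\eps/8$, or a smaller assumed cluster diameter); this is a constant-factor adjustment that does not change the claimed $\tO(kd\Delta/\eps + k^{2/3}(d\Delta/\eps)^4)$ complexity.
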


To prove this, we need the following result by Alon et al. that was implicit in
Algorithm 1 from \cite{cluster}.

\begin{lemma}(Algorithm 1 from \cite{cluster})
	There exists an algorithm which, given distribution $p$, returns $k' \le k$
	representative points if $p$ is $(k,b)$-clusterable, or rejects with
	probability $2/3$ if $p$ is $\gamma$-far from $(k,2b)$-clusterable, and
	which requires only $O(k\log k/\gamma)$ samples from $p$. Moreover, if the
	$k'$ points are returned, they are with probability $2/3$ the centers of a
	$(k,2b)$-clustering of all but a $\gamma$-weight of $p$.
\end{lemma}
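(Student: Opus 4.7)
The plan is to analyze the natural greedy sampling algorithm suggested by the statement: draw $m = \Theta(k \log k / \gamma)$ i.i.d.\ samples $x_1,\dots,x_m$ from $p$, and incrementally build a set $C$ of representatives by inserting each $x_i$ into $C$ iff $\min_{c \in C} \delta(x_i, c) > 2b$ (with $x_1$ always inserted). If $|C|$ ever exceeds $k$, reject; otherwise output $C$.

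For the completeness direction, if $p$ is $(k,b)$-clusterable, then the support of $p$ lies in $k$ balls of radius $b$, so any two points in the support of the same cluster are within distance $2b$ of one another. Hence the algorithm creates at most one representative per cluster, always returns $k' \le k$ points, and never rejects.

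For the soundness and approximation guarantees I would set up a single stochastic-domination argument. Let $C_i$ be the set of centers after processing $x_1,\dots,x_i$, and let $\rho_i = p\bigl(\bigcup_{c \in C_i} B(c, 2b)\bigr)$ be the current coverage; because $C_i$ is monotone in $i$, so is $\rho_i$. Introduce the stopping time $\tau = \min\{i : \rho_i \ge 1-\gamma\}$. The key observation is that for every $i \le \tau$, conditional on the history up to $i-1$, the sample $x_i$ falls outside $\bigcup_{c \in C_{i-1}} B(c, 2b)$ and thereby becomes a new center with conditional probability strictly greater than $\gamma$. Consequently the number of centers created by time $\min(\tau, m)$ stochastically dominates a $\mathrm{Binomial}(\min(\tau, m), \gamma)$ random variable, and a Chernoff bound with $m = \Theta(k \log k/\gamma)$ gives that this count exceeds $k$ with probability at least $2/3$.

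Applying this to the two cases: if $p$ is $\gamma$-far from $(k,2b)$-clusterable, then by the definition of $\gamma$-farness, every set of at most $k$ centers leaves mass greater than $\gamma$ uncovered, so $\rho_i < 1-\gamma$ as long as $|C_i| \le k$; hence $\tau > m$ until the algorithm has already rejected, and the Chernoff bound above forces rejection with probability at least $2/3$. For the quality guarantee, suppose the algorithm returns $C_m$ with $|C_m| \le k$ but $\rho_m < 1 - \gamma$; then $\tau > m$, and the same domination gives $|C_m| > k$ with probability at least $2/3$, contradicting the assumption. Thus with probability $\ge 2/3$ the returned centers form a $(k, 2b)$-clustering covering all but a $\gamma$-weight of $p$.

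The main obstacle is making the stochastic-domination step rigorous in the presence of the history-dependent $C_i$, since the conditional probability that $x_i$ triggers a new center is itself random. This can be resolved by coupling each $x_i$ to an independent $\mathrm{Bernoulli}(\gamma)$ variable whose value lower-bounds the indicator that $x_i$ falls outside the current balls while $i \le \tau$, after which the Chernoff estimate applies directly.
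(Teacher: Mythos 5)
Your proposal is correct. The paper itself offers no proof of this lemma --- it is imported verbatim as the guarantee of Algorithm 1 of Alon, Dar, Parnas and Ron --- and your greedy-representative construction (add a sample as a new center iff it lies more than $2b$ from all current centers, reject upon exceeding $k$ centers) together with the stopping-time/stochastic-domination argument is exactly the standard analysis underlying that algorithm, so you have in effect reconstructed the cited proof rather than found a new route. The only points deserving care are the ones you already flag: the coupling to i.i.d.\ $\mathrm{Bernoulli}(\gamma)$ variables to make the domination rigorous, and keeping the radius-versus-diameter convention for $(k,b)$-clusterability consistent (the paper later instantiates the lemma with diameters $\eps/4$ and $\eps/2$, so the factor-of-two relaxation must be bookkept in whichever convention you fix).
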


\begin{proof}\emph{(of Theorem \ref{thm:unknowncenters})\ }
By the lemma, if our distributions are $(k,\eps/4)$-clusterable, using
$\tO(kd\Delta/\eps)$ samples we obtain a $(k',\eps/2)$-clustering of all but an
$\frac{\eps}{4d\Delta}$-fraction of the support of $p$ and $q$, with
centers $\mathcal{C}'$. Note that the unclustered probability mass
contributes at most $\eps/4$ to the EMD. The theorem then follows from an identical
argument as that of Theorem \ref{thm:knowncenters} (since we now know the centers).
\end{proof}

Note that in general, if we assume $(k,b)$-clusterability, this implies
$((2b/\eps)^dk,\eps/2)$-clusterability (by packing $\ell_1$ balls), where knowledge
of the super-cluster centers also implies knowledge of the sub-cluster centers.
Similarly, in the unknown centers case, $(k,b)$-clusterability implies
$((8b/\eps)^dk,\eps/4)$-clusterability. Unfortunately, in both cases we
reintroduce exponential dependence on $d$, so clusterability only really helps
when the cluster diameters are as assumed above.

%==== SECTION =====================================================================
\section{EMD over tree-metrics}

So far we have considered only underlying $\ell_1$-spaces (or, almost equivalently,
$\ell_p$). We will now see what can be done for EMD over tree-metrics, and prove
the following result.

\begin{theorem}
	If $p$ and $q$ are distributions over the nodes of a tree $T$, with
	edge weight function $w(\cdot)$, then there exists an $\eps$-additive-error
	estimator for $EMD(p,q)$ that requires only $\tilde{O}((Wn/\eps)^2)$
	samples, where $W = \max_e w(e)$, $n$ is the number of nodes of $T$, and $EMD(p,q)$ is defined with
	respect to the tree metric of $T$. Moreover, up to polylog factors,
	this is optimal.
\end{theorem}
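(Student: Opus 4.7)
The plan is to exploit the well-known closed-form expression for EMD on a tree: if we root $T$ arbitrarily and let $T_e$ denote the subtree hanging below edge $e$, then by a standard ``cut'' argument, any satisfying flow must push exactly $|p(T_e) - q(T_e)|$ units of mass across $e$, so
\[
EMD(p,q) \;=\; \sum_{e \in T} w(e)\,\bigl|p(T_e) - q(T_e)\bigr|.
\]
This reduces the problem to estimating the $2(n-1)$ scalars $p(T_e)$ and $q(T_e)$ to sufficient additive accuracy. I would prove the formula via a short inductive/exchange argument: the lower bound is immediate from the cut, and the upper bound follows by describing an explicit flow that for each edge transports exactly $|p(T_e)-q(T_e)|$ mass along $e$ in the appropriate direction (pushing from the heavier side to the lighter).

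Next, for the upper bound algorithm: draw $m = O(W^2 n^2 \log n / \eps^2)$ i.i.d.\ samples from each of $p$ and $q$ and form the empirical distributions $\tilde p, \tilde q$. For each of the $n-1$ edges, $p(T_e)$ is a Bernoulli-type mean, so by a Chernoff/Hoeffding bound
\[
\Pr\!\left[\,|\tilde p(T_e) - p(T_e)| > \tfrac{\eps}{4Wn}\,\right] \;\le\; \tfrac{1}{10n},
\]
and similarly for $q$. A union bound over the $n-1$ edges gives, with probability $\ge 2/3$, simultaneous accuracy $\eps/(4Wn)$ on every $p(T_e)$ and every $q(T_e)$. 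Plugging these estimates into the formula and using $||a|-|b||\le|a-b|$ term-by-term, the total error in the estimate of $EMD(p,q)$ is at most
\[
\sum_e w(e)\cdot \tfrac{\eps}{2Wn} \;\le\; (n-1)W\cdot \tfrac{\eps}{2Wn} \;\le\; \eps,
\]
yielding the claimed sample complexity $\tilde O((Wn/\eps)^2)$.

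For the matching lower bound, I would reduce to the one-dimensional two-point lower bound already used in Section 5. Take $T$ to be a path on $n$ nodes whose edges all have weight $W$; the tree metric on its two endpoints then coincides with the interval $[0,\Delta]$ with $\Delta=(n-1)W$. A closeness tester on this tree immediately yields a one-dimensional EMD-closeness tester on $\{0,\Delta\}$, so the earlier $\Omega((\Delta/\eps)^2) = \Omega((Wn/\eps)^2)$ bound transfers. Putting these together matches the upper bound up to the $\log n$ factor absorbed in $\tilde O(\cdot)$.

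The step I expect to be the most delicate is the clean statement and proof of the tree-EMD formula (making sure orientations and the triangle-inequality style exchange argument are handled correctly so that the characterization is an equality, not merely an inequality); once that is in hand, both the Chernoff-plus-union-bound upper bound and the path-reduction lower bound are routine.
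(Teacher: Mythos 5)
Your proposal is correct and follows essentially the same route as the paper: the cut formula $EMD(p,q)=\sum_e w(e)|p(T_e)-q(T_e)|$, estimation of each subtree mass to additive accuracy $O(\eps/(Wn))$ via Chernoff plus a union bound, and a biased-coin lower bound on a path. The only (minor, favorable) difference is that your lower-bound instance uses a path with all edge weights equal to $W$, which yields the full $\Omega((Wn/\eps)^2)$ bound including the $W$-dependence, whereas the paper's unit-weight line graph only gives $\Omega((n/\eps)^2)$.
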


\begin{proof}
First, let us consider an unweighted tree $T$ over $n$ points (i.e., where
every edge has unit weight), with distributions $p$ and $q$ on the
vertices. Observe that the minimum cost flow between $p$ and $q$ on $T$ is
simply the flow that sends through each edge $e$ just enough to balance $p$
and $q$ on each subtree on either side of $e$. In other words, if $T_e$ is
an arbitrary one of the two trees comprising $T-e$,
\[
        EMD(p,q) = \sum_e |p(T_e) - q(T_e)|.
\]
Then, with $\tO(n^2/\eps^2)$ samples we can, for every $T_e$, estimate
$p(T_e)$ and $q(T_e)$ to within $\pm \frac{\eps}{2(n-1)}$. This
gives us an $\eps$-additive estimator for $EMD(p,q)$.

Generalizing to the case of a weighted tree, where edge $e$ has weight
$w(e)$, we have
\[
        EMD(p,q) = \sum_e w(e)|p(T_e) - q(T_e)|.
\]
It then suffices to estimate each $p(T_e)$ and $q(T_e)$ term to within
$\pm \frac{\eps}{2w(e)(n-1)}$. Thus, $\tO((Wn/\eps)^2)$ samples
suffice, where $W = \max_e w(e)$.

Note that what we we get is
not only a closeness tester but also an additive-error estimator.
In fact, even if we only want a tester, this is the best we can do: in the case where $T$ is a line graph (with diameter $n-1$), the standard biased-coin lower bound implies we need $\Omega(n^2/\eps^2)$ samples.
\end{proof}

%==== SECTION =====================================================================
%\section{Conclusion}
%
%\mnote{infinite domains?}

%==== REFERENCES ==================================================================
\bibliographystyle{plain}

\begin{thebibliography}{X}

\bibitem{cluster}
N.~Alon, S.~Dar, M.~Parnas and D.~Ron.
\newblock Testing of clustering.
\newblock {\em SIAM J. Discrete Math}, 16:393--417, 2003.

\bibitem{alex}
A.~Andoni, P.~Indyk and R.~Krauthgamer.
\newblock Earth Mover Distance over high-dimensional spaces.
\newblock {\em Proc. SODA}, 343--352, 2008.

\bibitem{L1indep}
T.~Batu, E.~Fischer, L.~Fortnow, R.~Kumar, R.~Rubinfeld and P.~White.
\newblock Testing random variables for independence and identity.
\newblock {\em Proc. FOCS}, 442--451, 2001.

\bibitem{L1tester}
T.~Batu, L.~Fortnow, R.~Rubinfeld, W.~D.~Smith and P.~White.
\newblock Testing that distributions are close.
\newblock {\em Proc. FOCS}, 259--269, 2000.

\bibitem{charikar}
M.~S.~Charikar.
\newblock Similarity estimation techniques from rounding algorithms.
\newblock {\em Proc. STOC}, 380--388, 2002.

\bibitem{cohen}
S.~Cohen and L.~Guibas.
\newblock The Earth Mover's Distance under transformation sets.
\newblock {\em Proc. ICCV}, 1076--1083, 1999.

\bibitem{GR}
O.~Goldreich and D.~Ron.
\newblock On Testing Expansion in Bounded-Degree Graphs.
\newblock {\em ECCC}, 7-20, 2000.

\bibitem{indyk2}
P.~Indyk.
\newblock A near linear time constant factor approximation for Euclidean Bichromatic Matching (Cost).
\newblock {\em Proc. SODA}, 39--42, 2007.

\bibitem{indyk}
P.~Indyk and N.~Thaper.
\newblock Fast image retrieval via embeddings.
\newblock {\em 3rd International Workshop on Statistical and Computational Theories of
                                        Vision}, 2003.

\bibitem{levina}
E.~Levina and P.~Bickel.
\newblock The Earth Mover's Distance is the Mallows Distance: some insights from
                                        statistics.
\newblock {\em Proc. ICCV}, 251--256, 2001.

\bibitem{peleg}
S.~Peleg, M.~Werman and H.~Rom.
\newblock A unified approach to the change of resolution: space and gray-level.
\newblock {\em Trans. Pattern Analysis and Machine Intelligence}, 11: 739--742.

\bibitem{rubner1}
Y.~Rubner and C.~Tomasi.
\newblock Texture metrics.
\newblock {\em Proc. ICSMC}, 4601--4607, 1998.

\bibitem{rubner2}
Y.~Rubner, C.~Tomasi and L.~J.~Guibas.
\newblock The Earth Mover's Distance, multi-dimensional scaling, and color-based
                                        image retrieval.
\newblock {\em Proc. ARPA Image Understanding Workshop}, 661--668, 1997.

\bibitem{rubner3}
Y.~Rubner, C.~Tomasi and L.~J.~Guibas.
\newblock A metric for distributions with applications to image databases.
\newblock {\em Proc. IEEE ICCV}, 59--66, 1998.

\bibitem{rubner4}
Y.~Rubner, C.~Tomasi and L.~J.~Guibas.
\newblock The Earth Mover's Distance as a metric for image retrieval.
\newblock {\em International Journal of Computer Vision}, 40(2): 99--121, 2000.

\bibitem{ruzon1}
M.~Ruzon and C.~Tomasi.
\newblock Color edge detection with the compass operator.
\newblock {\em Proc. CVPR}, 2:160--166, 1999.

\bibitem{ruzon2}
M.~Ruzon and C.~Tomasi.
\newblock Corner detection in textured color images.
\newblock {\em Proc. ICCV}, 2:1039--1045, 1999.

\bibitem{valiant}
P.~Valiant.
\newblock Testing Symmetric Properties of Distributions
\newblock {\em ECCC}, 2007.

\end{thebibliography}

\end{document}